\documentclass{article}

\usepackage[utf8]{luainputenc}
\usepackage{enumerate}
\usepackage{microtype}
\usepackage{framed}
\usepackage{hyperref}
\usepackage{bussproofs}
\usepackage{stmaryrd}
\usepackage{stackrel}
\usepackage{amsmath}
\usepackage{amssymb}
\usepackage{amsthm}

\usepackage{enumitem}

\newtheorem{theorem}{Theorem}
\newtheorem{proposition}[theorem]{Proposition}
\newtheorem{corollary}[theorem]{Corollary}
\theoremstyle{definition}
\newtheorem{definition}{Definition}

\usepackage{geometry}
\usepackage{pgf,tikz}
\usetikzlibrary{shapes,calc,
arrows,automata,positioning,backgrounds,fit,decorations.pathreplacing, matrix}
\usetikzlibrary {shapes.geometric,shapes.symbols}

\usepackage{todonotes}
\newcommand{\bra}{\langle}
\newcommand{\ket}{\rangle}

\newcommand{\mb}[1]{{\bf #1}}
\newcommand{\mc}[1]{{\mathcal{#1}}}

\newcommand{\Imp}{\mathop{\mathrm{Imp}}}

\newcommand{\Id}{\mathop{\mathrm{Id}}\nolimits}

\newcommand{\comp}{\mathrel{\mathrm{comp}}}

\newcommand{\vect}{\mathop{\mathrm{span}}}

\newcommand{\sas}{\mathbin{\&}}
\newcommand{\hook}{\mathbin{\delta}}

\newcommand{\fall}[1]{\forall \, {#1}, \ }
\newcommand{\fexist}[1]{\exists \, {#1}, \ }

\makeatletter
\newcommand{\verif@es}{\blacktriangleright}

\newcommand{\verifiesm}[1]{\mathrel{\verif@es_{{\scriptstyle \mathfrak {#1}}}}}
\newcommand{\verifiestarm}[1]{\mathrel{\verif@es^{\!\scriptstyle \#}_{{\!\scriptstyle \mathfrak {#1}}}}}
\makeatother

\makeatletter

\pgfdeclareshape{projector}{
	\savedanchor\northeast{%
        \pgfmathsetlength\pgf@x{\pgfshapeminwidth}%
        \pgfmathsetlength\pgf@y{\pgfshapeminheight}%
        \pgf@x=0.11\pgf@x
        \pgf@y=0.15\pgf@y
    }
	\savedanchor\southwest{%
        \pgfmathsetlength\pgf@x{\pgfshapeminwidth}%
        \pgfmathsetlength\pgf@y{\pgfshapeminheight}%
        \pgf@x=-0.11\pgf@x
        \pgf@y=-0.15\pgf@y
    }
	\inheritanchorborder[from=rectangle]

	\anchor{center}{\pgfpointorigin}
	\anchor{north}{\northeast \pgf@x=0pt}
    \anchor{east}{\northeast \pgf@y=0pt}
    \anchor{south}{\southwest \pgf@x=0pt}
    \anchor{west}{\southwest \pgf@y=0pt}
    \anchor{north east}{\northeast}
    \anchor{north west}{\northeast \pgf@x=-\pgf@x}
    \anchor{south west}{\southwest}
    \anchor{south east}{\southwest \pgf@x=-\pgf@x}
    \anchor{text}{
        \pgfpointorigin
        \advance\pgf@x by -.5\wd\pgfnodeparttextbox%
        \advance\pgf@y by -.5\ht\pgfnodeparttextbox%
        \advance\pgf@y by +.5\dp\pgfnodeparttextbox%
    }

	\backgroundpath{
		\southwest \pgf@xa=\pgf@x \pgf@ya=\pgf@y
        \northeast \pgf@yb=\pgf@y \pgf@xb=\pgf@x
		\pgfpathmoveto{\pgfpoint{\pgf@xa}{\pgf@ya}}
        \pgfpathlineto{\pgfpoint{\pgf@xa}{\pgf@yb}}
		\pgfpathmoveto{\pgfpoint{\pgf@xb}{\pgf@ya}}
        \pgfpathlineto{\pgfpoint{\pgf@xb}{\pgf@yb}}
		}
}

\tikzset{
	proj/.style={
		projector,
		draw, very thick,
		minimum width=2.2cm,
		minimum height=1.8cm},
smallproj/.style={
		projector,
		draw,
    minimum width=45pt,
    minimum height=30pt
	}
}

\makeatother

\begin{document}

\title{A Formal Theory for Finite-Dimensional Possibilistic Quantum Mechanics}
\author{Olivier Brunet \\ \texttt{olivier.brunet at normalesup.org}}

\maketitle

\begin{abstract}
	In this work, we present a logical formalism for reasoning about quantum systems in finite dimension. Contrary to the usual approach in quantum logic, our formalism is based classical first-order logic, which allows us to use the tools of model theory in our study. In particular, we show that our formal theory is complete, meaning that it entirely determines the behaviour of quantum systems. Moreover, we provide a characterization of the models of our formal theory, thus providing new insights in the study of hidden variable models of quantum theory.
\end{abstract}

\section{Introduction}

Quantum logic was born in a seminal article from Birkhoff and von Neumann~\cite{Birkhoff36QuantumLogic} who observed~that
\begin{quotation}
	propositional calculus of quantum mechanics has the same structure as an abstract projective geometry.
\end{quotation}
This departure from classical logic and from boolean algebras has initiated a whole field of research, with the emphasis put on the study of alternate algebraic structures such as orthomodular lattices \cite{Beltrametti81:LogicQM}, which has proven to be an important tool for gaining a better understanding of quantum mechanics and their foundations.

However, it is not necessary to depart from classical logic and in this article, we will present a formalism for reasoning about quantum systems based on classical logic. More precisely, we will define a first-order theory where formal variables correspond to quantum systems. Instead of being used for interpreting logical formulas, the closed subspaces of a Hilbert space (and, more generally, the elements of an orthomodular lattice) will provide predicates, called \emph{verification statements}, which express predictions regarding the detection of the particules of the quantum system.

This formulation is both non-probabilistic and operational: operational since logical terms are only used to describe the layout of a quantum circuit applied to a system, and non-probabilistic since our verifications statements will express falsifiable prediction about what can or cannot be detected at the output of a circuit.

Having a first-order logical theory, it is then possible to use the tools of model theory~\cite{Marker:Introduction} to study its properties. Moreover, the way this theory is formulated allows us to make a clear distinction between the epistemic description of a quantum system, the formulas of the language, and the ontic one, corresponding to the models of the theory and their elements.

The main contributions of this article are a formulation of finite-dimensional possibilistic quantum mechanics as a complete first-order theory, which axiomatization is, in its final form, given on figure~\ref{fig:rules2} on page~\pageref{fig:rules2}, and a characterization of its models.

In section~2, we will introduce the formalization of quantum circuits and their behaviours, define verification statements and provide a list of properties verified by these statements. Then, in section~3, we will turn these properties into a collection of axioms of a suitably defined first-order theory. Then, in the end of this section and in the next one, we will focus on some properties of this theory, and show that it is consistent and complete. Finally, we present an important consequence of our completeness proof, namely a characterization of the models of this theory.

Except for section~2 where we motivate the formilization of quantum circuits, this article will mainly focus on the logical aspects of the theory, leaving the discussion about quantum mechanics in a following companion article.

\section{Describing Quantum Systems and Their Evolution}


We will base our approach on a simplified formalization of quantum circuit with two types operators: unitary operators and projection operators. For instance, the circuit represented in figure~\ref{fig:bell} produces Bell state
\[|\Phi^+\ket = \frac 1 {\sqrt 2} \bigl(|00\ket + |11\ket\bigr)\]
where time flows from left to right and the projection on subspace~\(P\) is represented~as
\begin{center}
	\tikz{
		\matrix [column sep = {6mm,between origins}] {
			\node (a0) {};            &
			\node[proj] (a1) {\(P\)}; &
			\node (a2) {};              \\
		} ;
		\draw [-] (a0) -- (a1) -- (a2);
	}
\end{center}

\begin{figure}
	\begin{framed}
		\hbox{} \hfill \begin{tikzpicture}

			\tikzstyle{operator} = [draw,fill=white,minimum size=1.5em]
			\tikzstyle{phase} = [fill,shape=circle,minimum size=5pt,inner sep=0pt]
			\tikzstyle{point} = [fill,shape=circle,minimum size=3pt,inner sep=0pt]

			\matrix [column sep = {8mm,between origins}, row sep = {8mm,between origins}] {
				\node (a0) {} ;                                            &
				\node[proj] (a1) {\(\mb 0\)};                              &
				\node[operator] (a2) {H};                                  &
				\node[phase] (a3) {} ;                                     &
				\node (a4) {};                                               \\
				\node (b0) {};                                             &
				\node[proj] (b1) {\(\mb 0\)};                              &
				                                                           &
				\node[inner sep=0pt, outer sep = 0pt] (b2) {$\bigoplus$} ; &
				\node (b3) {} ;                                              \\
			} ;

			\draw [-] (a0) -- (a1) -- (a2) -- (a3) -- (a4) ;
			\draw [-] (b0) -- (b1) -- (b2) -- (b3) ;
			\draw [-] (a3) -- (b2);
		\end{tikzpicture} \hfill \hbox{}
		\caption{Creation of a Bell state}
		\label{fig:bell}
	\end{framed}
\end{figure}

A quantum system is said to be \emph{possible} if, when putting detectors are each output of the circuit, it is possible to simultaneously detect a particle at each detector. Otherwise, it is~\emph{impossible}. For instance, the output of the circuit depicted in figure~\ref{fig:bell2} is impossible since~\(| 10\ket\) is orthogonal to~\(|\Phi^+\ket\). Claiming that a circuit is~\emph{impossible} is a falsifiable statement, and will be at the heart of our approach.

\begin{figure}
	\begin{framed}
		\hbox{} \hfill \begin{tikzpicture}

			\tikzstyle{operator} = [draw,fill=white,minimum size=1.5em]
			\tikzstyle{phase} = [fill,shape=circle,minimum size=5pt,inner sep=0pt]
			\tikzstyle{point} = [fill,shape=circle,minimum size=3pt,inner sep=0pt]

			\matrix [column sep = {8mm,between origins}, row sep = {8mm,between origins}] {
				\node (a0) {} ;                                            &
				\node[proj] (a1) {\(\mb 0\)};                              &
				\node[operator] (a2) {H};                                  &
				\node[phase] (a3) {} ;                                     &
				\node[proj] (a4) {\(\mb 0\)} ;                             &
				\node (a5) {};                                               \\
				\node (b0) {};                                             &
				\node[proj] (b1) {\(\mb 0\)};                              &
				                                                           &
				\node[inner sep=0pt, outer sep = 0pt] (b2) {$\bigoplus$} ; &
				\node[proj] (b3) {\(\mb 1\)} ;                             &
				\node (b4) {} ;                                              \\
			} ;

			\draw [-] (a0) -- (a1) -- (a2) -- (a3) -- (a4) -- (a5) ;
			\draw [-] (b0) -- (b1) -- (b2) -- (b3) -- (b4) ;
			\draw [-] (a3) -- (b2);
		\end{tikzpicture} \hfill \hbox{}
		\caption{Creation of a Bell state}
		\label{fig:bell2}
	\end{framed}
\end{figure}

However, in the previous example,  we relied on some sort of~\emph{state} of a quantum state, a notion we explicitly want to avoid as a foundational element of our formalism. On the contrary, one objective of our approach is to investigate what constraints on any notion of quantum state are imposed by a logical approach.
However, through this example, we have introduced the different components of our formalism:
\begin{itemize}
	\item two types operators, namely unitary operators and projectors,
	\item a predicate telling whether a quantum system is impossible.
\end{itemize}
The semantics of these ingredients will entirely be specified by the five rules we will detail shortly. Before that, let us specify a few more elements.

First, we will consider a quantum system as a whole, so that the graph-like structure of circuit is replaced by a single linear composition of operators. For instance, the impossible circuit of figure~\ref{fig:bell2} becomes:
\begin{center}
	\begin{tikzpicture}[
			node distance=3mm,
			unitary/.style={
					rectangle,
					draw,
				},
			projo/.style={
					rectangle,
				},
			system/.style={
					rounded rectangle,
					draw,
				}
		]
		\node (0) [system] {\(\vphantom{Ip}s\)};
		\node (1) [projo, base right=of 0] {\(\vphantom{Ip} \mb 0 \otimes \mb 0\)};
		\node (2) [unitary, base right=of 1] {\(\vphantom{Ip} H \otimes \Id\)};
		\node (3) [unitary, base right=of 2] {\(\vphantom{Ip} \mathrm{CNot}(1, 2)\)};
		\node (4) [projo, base right=of 3] {\(\vphantom{Ip} \mb 1 \otimes \mb 0\)};
		\node (5) [base right=of 4] {\(\vphantom{Ip}: \Imp\)};
		\draw (0) -- (1) -- (2) -- (3) -- (4) -- (5) ;
		\draw (1.south west) -- (1.north west);
		\draw (1.south east) -- (1.north east);
		\draw (4.south west) -- (4.north west);
		\draw (4.south east) -- (4.north east);
	\end{tikzpicture}
\end{center}

where~\(s\) denotes the quantum system entering the circuit. This can be also written in a more mathematical form~as
\[ \mathop \pi\nolimits_{\mb 1 \otimes \mb 0} \circ \mathop {\mathrm{CNot}}\nolimits_{1, 2} \circ \mathop {H \otimes \Id} \circ \mathop \pi\nolimits_{\mb 0 \otimes \mb 0}(s) : \Imp\]

We also need introduce a few more notations, and one supplementary assumption stating that to each finite dimensional quantum system, one can associate a~\emph{dimension}~\(d \in \mb N\) so that projectors acting on this system can be associated to subspaces of~\(\mb C^d\). Let~\(\mb H_d\) denote the set of subspaces of~\(\mb C^d\), let~\(\mb V_{\! d}\) denote the subset of~\(\mb H_d\) made of the one-dimensional subspaces (or vector rays) of~\(\mb H_d\). To follow usual lattice notations,~\(\top\) (resp.~\(\bot\)) will denote the greatest (resp. smallest) element of~\(\mb H_d\) w.r.t.~inclusion. Finally let~\(\mb U_d\) denote the set of unitary operators acting on~\(\mb C^d\).

\medskip

We can now list the rules defining the meaning of both operators and the impossibility predicate in our approach.
First, the successive application of two projectors on orthogonal subspaces leads to an impossible circuit:

\begin{enumerate}[label=\textbf{Rule \arabic*}]
	\item \[ \fall {p, q \in \mb H_d}\qquad  q \leq p^\bot \quad \implies \quad
		      \begin{tikzpicture}[
				      baseline=(s.base),
				      node distance=3mm,
				      unitary/.style={
						      rectangle,
						      draw,
					      },
				      projo/.style={
						      rectangle,
					      },
				      system/.style={
						      rounded rectangle,
						      draw,
					      }
			      ]
			      \node (s) [system] {\(\vphantom{Ip}s\)};
			      \node (p1) [projo, base right=of s] {\(\vphantom{Ip} p\)};
			      \node (p2) [projo, base right=of p1] {\(\vphantom{Ip}q\)};
			      \node (imp) [base right=of p2] {\(\vphantom{Ip}: \Imp\)};
			      \draw (s) -- (p1) -- (p2) -- (imp);
			      \draw (p1.south west) -- (p1.north west);
			      \draw (p1.south east) -- (p1.north east);
			      \draw (p2.south west) -- (p2.north west);
			      \draw (p2.south east) -- (p2.north east);
		      \end{tikzpicture}
	      \]
	      Here, the circle labelled~\(s\) represented the input quantum system.
\end{enumerate}

If~\(q \leq p\), then projecting on~\(p\) and then on~\(q\) is equivalent to directly projecting on~\(q\). From this general principle, we derive two rules:
\begin{enumerate}[label=\textbf{Rule \arabic*}, resume]
	\item
	      \begin{multline*}
		      \fall {p, q \in \mb H_d} \quad q \leq p \implies \\
		      \Biggl(
		      \begin{tikzpicture}[
					      baseline=(s.base),
					      node distance=2mm,
					      unitary/.style={
							      rectangle,
							      draw,
						      },
					      projo/.style={
							      rectangle,
						      },
					      system/.style={
							      rounded rectangle,
							      draw,
						      }
				      ]
				      \node (s) [system] {\(\vphantom{Ip}s\)};
				      \node (p1) [projo, base right=of s] {\(\vphantom{Ip} p\)};
				      \node (p2) [projo, base right=of p1] {\(\vphantom{Ip}q\)};
				      \node (imp) [base right=5pt of p2] {\!\!\(\vphantom{Ip}: \Imp\)};
				      \draw (s) -- (p1) -- (p2) -- (imp);
				      \draw (p1.south west) -- (p1.north west);
				      \draw (p1.south east) -- (p1.north east);
				      \draw (p2.south west) -- (p2.north west);
				      \draw (p2.south east) -- (p2.north east);
			      \end{tikzpicture}
		      \iff
		      \begin{tikzpicture}[
					      baseline=(s.base),
					      node distance=2mm,
					      unitary/.style={
							      rectangle,
							      draw,
						      },
					      projo/.style={
							      rectangle,
						      },
					      system/.style={
							      rounded rectangle,
							      draw,
						      }
				      ]
				      \node (s) [system] {\(\vphantom{Ip}s\)};
				      \node (p2) [projo, base right=of s] {\(\vphantom{Ip}q\)};
				      \node (imp) [base right=5pt of p2] {\!\!\(\vphantom{Ip}: \Imp\)};
				      \draw (s) -- (p2) -- (imp);
				      \draw (p2.south west) -- (p2.north west);
				      \draw (p2.south east) -- (p2.north east);
			      \end{tikzpicture}
		      \Biggr)
	      \end{multline*}
	\item
	      \begin{multline*}
		      \fall {p, q, r \in \mb H_d} \quad q \leq p \implies \\
		      \Biggl(
		      \begin{tikzpicture}[
					      baseline=(s.base),
					      node distance=2mm,
					      unitary/.style={
							      rectangle,
							      draw,
						      },
					      projo/.style={
							      rectangle,
						      },
					      system/.style={
							      rounded rectangle,
							      draw,
						      }
				      ]
				      \node (s) [system] {\(\vphantom{Ip}s\)};
				      \node (p1) [projo, base right=of s] {\(\vphantom{Ip} p\)};
				      \node (q1) [projo, base right=of p1] {\(\vphantom{Ip}q\)};
				      \node (r1) [projo, base right=of q1] {\(\vphantom{Ip}r\)};
				      \node (imp) [base right=5pt of r1] {\!\!\(\vphantom{Ip}: \Imp\)};
				      \draw (s) -- (p1) -- (q1) -- (r1) -- (imp);
				      \draw (p1.south west) -- (p1.north west);
				      \draw (p1.south east) -- (p1.north east);
				      \draw (q1.south west) -- (q1.north west);
				      \draw (q1.south east) -- (q1.north east);
				      \draw (r1.south west) -- (r1.north west);
				      \draw (r1.south east) -- (r1.north east);
			      \end{tikzpicture}
		      \iff
		      \begin{tikzpicture}[
					      baseline=(s.base),
					      node distance=2mm,
					      unitary/.style={
							      rectangle,
							      draw,
						      },
					      projo/.style={
							      rectangle,
						      },
					      system/.style={
							      rounded rectangle,
							      draw,
						      }
				      ]
				      \node (s) [system] {\(\vphantom{Ip}s\)};
				      \node (p2) [projo, base right=of s] {\(\vphantom{Ip}q\)};
				      \node (r2) [projo, base right=of p2] {\(\vphantom{Ip}r\)};
				      \node (imp) [base right=5pt of r2] {\!\!\(\vphantom{Ip}: \Imp\)};
				      \draw (s) -- (p2) -- (r2) -- (imp);
				      \draw (p2.south west) -- (p2.north west);
				      \draw (p2.south east) -- (p2.north east);
				      \draw (r2.south west) -- (r2.north west);
				      \draw (r2.south east) -- (r2.north east);
			      \end{tikzpicture}
		      \Biggr)
	      \end{multline*}
\end{enumerate}

Let's now consider unitary operators. If~\(U(p) = q\), then ending by projecting on~\(p\) is equivalent to ending by applying~\(U\) and then projecting on~\(q\):
\begin{enumerate}[label=\textbf{Rule \arabic*}, resume]
	\item
	      \begin{multline*}
		      \fall {p, q \in \mb H_d} \fall {U \in \mb U_d} q = U(p) \implies \\
		      \Biggl(
		      \begin{tikzpicture}[
					      baseline=(s.base),
					      node distance=2mm,
					      unitary/.style={
							      rectangle,
							      draw,
						      },
					      projo/.style={
							      rectangle,
						      },
					      system/.style={
							      rounded rectangle,
							      draw,
						      }
				      ]
				      \node (s) [system] {\(\vphantom{Ip}s\)};
				      \node (u) [rectangle, draw, base right=of s] {\(\vphantom{Ip} U\)};
				      \node (p) [projo, base right=of u] {\(\vphantom{Ip}q\)};
				      \node (imp) [base right=5pt of p] {\!\!\(\vphantom{Ip}: \Imp\)};
				      \draw (s) -- (u) -- (p) -- (imp);
				      \draw (p.south west) -- (p.north west);
				      \draw (p.south east) -- (p.north east);
			      \end{tikzpicture}
		      \iff
		      \begin{tikzpicture}[
					      baseline=(s.base),
					      node distance=2mm,
					      unitary/.style={
							      rectangle,
							      draw,
						      },
					      projo/.style={
							      rectangle,
						      },
					      system/.style={
							      rounded rectangle,
							      draw,
						      }
				      ]
				      \node (s) [system] {\(\vphantom{Ip}s\)};
				      \node (p) [projo, base right=of s] {\(\vphantom{Ip}p\)};
				      \node (imp) [base right=5pt of p] {\!\!\(\vphantom{Ip}: \Imp\)};
				      \draw (s) -- (p2) -- (imp);
				      \draw (p.south west) -- (p.north west);
				      \draw (p.south east) -- (p.north east);
			      \end{tikzpicture}
		      \Biggr)
	      \end{multline*}
\end{enumerate}
Applying a unitary operator preserves the impossible character of a quantum system:
\begin{enumerate}[label=\textbf{Rule \arabic*}, resume]
	\item
	      \[
		      \fall {U \in \mb U_d} \qquad \Biggl(
		      \begin{tikzpicture}[
					      baseline=(s.base),
					      node distance=2mm,
					      unitary/.style={
							      rectangle,
							      draw,
						      },
					      projo/.style={
							      rectangle,
						      },
					      system/.style={
							      rounded rectangle,
							      draw,
						      }
				      ]
				      \node (s) [system] {\(\vphantom{Ip}s\)};
				      \node (imp) [base right=5pt of s] {\!\!\(\vphantom{Ip}: \Imp\)};
				      \draw (s) -- (imp);
			      \end{tikzpicture}
		      \iff
		      \begin{tikzpicture}[
					      baseline=(s.base),
					      node distance=2mm,
					      unitary/.style={
							      rectangle,
							      draw,
						      },
					      projo/.style={
							      rectangle,
						      },
					      system/.style={
							      rounded rectangle,
							      draw,
						      }
				      ]
				      \node (s) [system] {\(\vphantom{Ip}s\)};
				      \node (u) [rectangle, draw, base right=of s] {\(\vphantom{Ip} U\)};
				      \node (imp) [base right=5pt of p] {\!\!\(\vphantom{Ip}: \Imp\)};
				      \draw (s) -- (u) -- (imp);
			      \end{tikzpicture}
		      \Biggr)
	      \]
\end{enumerate}
The last rule, by far the most important, needs a bit more context. Consider two orthogonal vector rays~\(\psi_1, \psi_2 \in \mb V_{\! d}\) (with~\(\psi_2 \leq \psi_1^\bot\)) and assume that system~\(s\) is such that:
\[
	\begin{tikzpicture}[
			baseline=(s.base),
			node distance=2mm,
			system/.style={
					rounded rectangle,
					draw,
				}
		]
		\node (s1) [system] {\(\vphantom{Ip}s\)};
		\node (p1) [rectangle, base right=of s1] {\(\vphantom{Ip} \psi_1\)};
		\node (imp1) [base right=5pt of p1] {\!\!\(\vphantom{Ip}: \Imp\)};
		\draw (s1) -- (p1) -- (imp1);
		\draw (p1.north west) -- (p1.south west);
		\draw (p1.north east) -- (p1.south east);
	\end{tikzpicture}
	\qquad
	\mathbin{\mathrm{and}} \qquad \
	\begin{tikzpicture}[
			baseline=(s.base),
			node distance=2mm,
			system/.style={
					rounded rectangle,
					draw,
				}
		]
		\node (s) [system] {\(\vphantom{Ip}s\)};
		\node (p) [rectangle, base right=of s] {\(\vphantom{Ip} \psi_2\)};
		\node (imp) [base right=5pt of p] {\!\!\(\vphantom{Ip}: \Imp\)};
		\draw (s) -- (p) -- (imp);
		\draw (p.north west) -- (p.south west);
		\draw (p.north east) -- (p.south east);
	\end{tikzpicture}
\]
Suppose moreover that the projection on~\(\psi_1\) and~\(\psi_2\) is made using a path-based device such as a beam-splitter or a Stern-Gerlach apparatus. As a result, the systems projected on~\(\psi_1\) and on~\(\psi_2\) take two distinct branches, leading either to~\(A\) or to~\(B\) where one can put a detector.

\begin{center}
	\begin{tikzpicture}
		\node (BS) [rectangle, draw] {BS};
		\node (S) [circle, draw, left=of BS] {\(s\)};
		\node (PSI1) [above right=of BS] {\(A\)};
		\node (PSI2) [below right=of BS] {\(B\)};
		\draw [-latex] (S) -- (BS);
		\draw [dashed, -latex] (BS) -- node [midway, sloped, above] {\(\psi_1\)} (PSI1);
		\draw [dashed, -latex] (BS) -- node [midway, sloped, above] {\(\psi_2\)} (PSI2);
	\end{tikzpicture}
\end{center}

Now, the assumption that~\(\pi_{\psi_1}(s) : \Imp\) (resp.~\(\pi_{\psi_2}(s) : \Imp\)) means that no particle can be detected, neither at~\(A\) nor at~\(B\). Moreover, as~\(A\) and~\(B\) can be placed arbitrarily far from one another, relativity dictates that the non-detection at~\(A\) cannot depend on whether the particle has been detected (or, even, if there is a detector) at~\(B\), and vice versa. In other words, no particle takes either path.

As a consequence, if one recombines the two paths (as in the following figure, where the two paths are deflected and recombined at~\(C\)), one obtains a projector on~\(\psi_1 \vee \psi_2\) and since no particle will be detected after~\(C\), we deduce that~\(\pi_{\psi_1 \vee \psi_2}(s) : \Imp\).

\begin{center}
	\begin{tikzpicture}
		\node (BS) [rectangle, draw] {BS};
		\node (S) [circle, draw, left=of BS] {\(s\)};
		\node (PSI1) [above right=of BS, rectangle, draw, inner sep=0pt, text width=20pt] {};
		\node (PSI2) [below right=of BS, rectangle, draw, inner sep=0pt, text width=20pt] {};
		\node (C) [below right=of PSI1] {\(C\)};
		\node (D) [right=40pt of C] {};
		\draw [-latex] (S) -- (BS);:
		\draw [dashed, -latex] (BS) -- node [midway, sloped, above] {\(\psi_1\)} (PSI1) ;
		\draw [dashed, -latex] (BS) -- node [midway, sloped, above] {\(\psi_2\)} (PSI2) ;
		\draw [dashed, -latex] (PSI1) -- (C) ;
		\draw [dashed, -latex] (PSI2) -- (C) ;
		\draw [dashed, -latex] (C) -- node [midway, above] {\(\psi_1 \vee \psi_2\)} (D) ;
	\end{tikzpicture}
\end{center}

\begin{enumerate}[label=\textbf{Rule \arabic*}, resume]
	\item Formally, this rule is expressed as
	      \begin{multline*}
		      \fall {\psi_1, \psi_2 \in \mb V_{\! d}} \qquad \psi_1 \leq \psi_2^\bot \implies \\[5pt]
		      \begin{tikzpicture}[
				      baseline=(s.base),
				      node distance=2mm,
				      system/.style={
						      rounded rectangle,
						      draw,
					      }
			      ]
			      \node (s1) [system] {\(\vphantom{Ip}s\)};
			      \node (p1) [rectangle, base right=of s1] {\(\vphantom{Ip} \psi_1\)};
			      \node (imp1) [base right=5pt of p1] {\!\!\(\vphantom{Ip}: \Imp\)};
			      \draw (s1) -- (p1) -- (imp1);
			      \draw (p1.north west) -- (p1.south west);
			      \draw (p1.north east) -- (p1.south east);
		      \end{tikzpicture}
		      \mathbin{\mathrm{and}} \
		      \begin{tikzpicture}[
				      baseline=(s.base),
				      node distance=2mm,
				      system/.style={
						      rounded rectangle,
						      draw,
					      }
			      ]
			      \node (s) [system] {\(\vphantom{Ip}s\)};
			      \node (p) [rectangle, base right=of s] {\(\vphantom{Ip} \psi_2\)};
			      \node (imp) [base right=5pt of p] {\!\!\(\vphantom{Ip}: \Imp\)};
			      \draw (s) -- (p) -- (imp);
			      \draw (p.north west) -- (p.south west);
			      \draw (p.north east) -- (p.south east);
		      \end{tikzpicture}
		      \implies
		      \begin{tikzpicture}[
				      baseline=(s.base),
				      node distance=2mm,
				      system/.style={
						      rounded rectangle,
						      draw,
					      }
			      ]
			      \node (s) [system] {\(\vphantom{Ip}s\)};
			      \node (p) [rectangle, base right=of s] {\(\vphantom{Ip} \psi_1 \vee \psi_2 \)};
			      \node (imp) [base right=5pt of p] {\!\!\(\vphantom{Ip}: \Imp\)};
			      \draw (s) -- (p) -- (imp);
			      \draw (p.north west) -- (p.south west);
			      \draw (p.north east) -- (p.south east);
		      \end{tikzpicture}
	      \end{multline*}
\end{enumerate}

These rules are summarized in figure~\ref{fig:crules}.

\begin{figure}
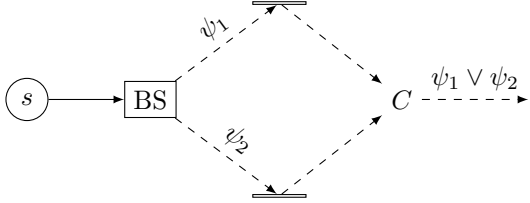

	\begin{gather*}
		q \leq p^\bot \implies \pi_q \circ \pi_p (s) : \Imp \label{c1} \tag{\(R_1\)}\\
		q \leq p \implies \bigl(\pi_q \circ \pi_p (s) : \Imp \iff \pi_q(s) : \Imp\bigr) \label{c2} \tag{\(R_2\)} \\
		q \leq p \implies \bigl(\pi_r \circ \pi_q \circ \pi_p (s) : \Imp \iff \pi_r \circ \pi_q(s) : \Imp\bigr) \label{c3} \tag{\(R_{3}\)} \\
		\pi_p(s) : \Imp \iff \pi_{U(p)} \circ U(s) : \Imp  \label{c4} \tag{\(R_4\)} \\
		s : \Imp \iff U(s) : \Imp \tag{\(R_5\)} \\
		\left.
		\begin{array}{c}
			\psi_1 \leq \psi_2^\bot \\[3pt]
			\pi_{\psi_1}(s) : \Imp  \\[3pt]
			\pi_{\psi_2}(s) : \Imp
		\end{array} \right\}
		\implies \pi_{\psi_1 \vee \psi_2}(s) : \Imp \label{c6} \tag{\(R_6\)}\\
	\end{gather*}

	\caption{Rules for circuits}
	\label{fig:crules}
\end{figure}


\bigskip

Using these rules, which deal with the transformation of circuits, we now define a collection of predicates on quantum system so as to reason about them.

\begin{definition}
	Given a quantum system~\(s\) of dimension~\(d\) and a property~\(p \in \mb H_d\), we said that~``\(s\) \emph{verifies}~\(p\)'' and write~\([s : p]\) if and only~if
	\[\fall {\varphi \in \mb V_{\! d}} \varphi \leq p^\bot \implies \pi_\varphi(s) : \Imp \]
\end{definition}

It must be remarked that verification statements are entirely based on \emph{impossibility}, so that they too are falsifiable and they do not refer to any sort of \emph{quantum state}.

\begin{figure}[ht]
	\begin{align*}
		          &  & \fexist x & \qquad  \neg [x : \bot] \label{ax:negbot} \tag{\(\neg \bot\)}                             \\
		          &  & \fall x   & \qquad  [x:\top] \label{ax:top} \tag{\(\top\)}                                            \\
		p \leq q  &  & \fall x   & \qquad [x:p] \rightarrow [x:q] \label{ax:leq} \tag{\(\leq\)}                              \\
		p \comp q &  & \fall x   & \qquad  [x:p] \wedge [x:q] \rightarrow [x : p \wedge q] \label{ax:wedge} \tag{\(\wedge\)} \\[6pt]
		          &  & \fall x   & \qquad  [x : p] \rightarrow [\pi_q(x) : p \sas q]       \label{ax:pii} \tag{\(\pi_i\)}    \\
		p \leq q  &  & \fall x   & \qquad [\pi_p \circ \pi_q (x) : \bot] \rightarrow
		[\pi_p(x) : \bot] \label{ax:pic} \tag{\(\pi_c\)}                                                                     \\
		          &  & \fall x   & \qquad  [\pi_q(x) : \bot] \rightarrow [x : q^\bot] \label{ax:pibot} \tag{\(\pi_\bot\)}    \\[3pt]
		          &  & \fall x   & \qquad  [x : p] \rightarrow [U(x) : U(p)] \label{ax:ui} \tag{\(u_i\)}                     \\
		          &  & \fall x   & \qquad  [U(x) : p] \rightarrow [x : U^{-1}(p)] \label{ax:ubot} \tag{\(u_e\)}
	\end{align*}
	\caption{The axioms of \(\mathrm{PQM}\)}
	\label{fig:rules}
\end{figure}

Using these predicates, we present in figure~\ref{fig:rules} a collection of various axioms regarding the verification of properties by quantum systems. Before reviewing them, let's observe that they are all consequences of our rules on quantum circuits:

\begin{proposition} \label{prop:valid_axioms}
	All the axioms listed in figure~\ref{fig:rules} following from the previous rules.
\end{proposition}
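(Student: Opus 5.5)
The plan is to check the axioms of figure~\ref{fig:rules} one at a time, going back each time to the definition of \([s:p]\) as ``\(\pi_\varphi(s):\Imp\) for every ray \(\varphi\leq p^\bot\)''. Several axioms follow from a single rule. The axioms involving meets and Sasaki projections need an auxiliary statement that lets us combine impossibilities along an orthogonal decomposition. Throughout, \(p\comp q\) is read as compatibility of \(p\) and \(q\), and \(p\sas q\) as the Sasaki projection \((p\vee q^\bot)\wedge q\), so that \((p\sas q)^\bot=(p^\bot\wedge q)\vee q^\bot\).

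\textbf{Axioms that follow directly.}
\begin{itemize}
\item For~(\(\top\)), \(\top^\bot=\bot\) contains no ray, so the condition is vacuous.
\item For~(\(\leq\)), \(p\leq q\) gives \(q^\bot\leq p^\bot\), so every ray below \(q^\bot\) is already below \(p^\bot\).
\item For~(\(\pi_\bot\)), \([\pi_q(x):\bot]\) says \(\pi_\varphi\circ\pi_q(x):\Imp\) for all rays \(\varphi\). Restricting to rays \(\varphi\leq q\), rule~\eqref{c2} gives \(\pi_\varphi(x):\Imp\), which is exactly \([x:q^\bot]\).
\item For~(\(\pi_c\)), take any ray \(\varphi\). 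Rule~\eqref{c3}, with \(r=\varphi\) and the hypothesis \(p\leq q\), turns \(\pi_\varphi\circ\pi_p\circ\pi_q(x):\Imp\) into \(\pi_\varphi\circ\pi_p(x):\Imp\).
\item For~(\(u_i\)), every ray below \(U(p)^\bot=U(p^\bot)\) has the form \(U(\psi)\) with \(\psi\leq p^\bot\). Rule~\eqref{c4} then transfers \(\pi_\psi(x):\Imp\) to \(\pi_{U(\psi)}\circ U(x):\Imp\).
\item For~(\(u_e\)), run the same argument backwards, applying~\eqref{c4} to \(U^{-1}\).
\end{itemize}

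\textbf{Auxiliary lemma.} For the remaining cases I will first prove a generalised form of~\eqref{c6}. If \(\psi_1,\dots,\psi_k\) are pairwise orthogonal rays with \(\pi_{\psi_i}(s):\Imp\) for all \(i\), then \(\pi_\varphi(s):\Imp\) for every ray \(\varphi\leq\psi_1\vee\dots\vee\psi_k\). The proof goes by induction on \(k\). For the inductive step, choose an orthonormal basis of \(\psi_1\vee\dots\vee\psi_k\) containing \(\varphi\), transport it by a unitary, and use~\eqref{c4} and~(\(R_5\)) to move between bases. \emph{This is the step I expect to be the main obstacle.} Rule~\eqref{c6} as stated only produces \(\pi_{\psi_1\vee\psi_2}(s):\Imp\), not impossibility of projections onto the individual rays inside \(\psi_1\vee\psi_2\). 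So I also need a monotonicity principle: if \(\pi_p(s):\Imp\) and \(\varphi\leq p\), then \(\pi_\varphi(s):\Imp\). I would try to get this from~\eqref{c2}: \(\pi_\varphi\circ\pi_p(s)\) is impossible iff \(\pi_\varphi(s)\) is. This reduces monotonicity to the claim that an impossible system stays impossible after a further projection. That claim is implicit in the operational reading of the rules (no particle in, no particle out), and if it cannot be squeezed out of~\eqref{c1}--\eqref{c6} I will state it explicitly as part of that reading.

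\textbf{Axioms using the lemma.}
\begin{itemize}
\item For~(\(\wedge\)), compatibility of \(p\) and \(q\) provides an orthonormal basis of \((p\wedge q)^\bot=p^\bot\vee q^\bot\) in which each vector lies in \(p^\bot\) or in \(q^\bot\). The hypotheses make every such projection impossible, and the lemma then covers every ray below \((p\wedge q)^\bot\).
\item For~(\(\pi_i\)), decompose \((p\sas q)^\bot=(p^\bot\wedge q)\vee q^\bot\) into an orthonormal basis of rays lying either in \(q^\bot\) or in \(p^\bot\wedge q\). For a ray \(\psi\leq q^\bot\), rule~\eqref{c1} gives \(\pi_\psi\circ\pi_q(x):\Imp\). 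For a ray \(\psi\leq p^\bot\wedge q\), rule~\eqref{c2} reduces \(\pi_\psi\circ\pi_q(x)\) to \(\pi_\psi(x)\), which is impossible by \([x:p]\). Applying the lemma to the system \(\pi_q(x)\) then yields \([\pi_q(x):p\sas q]\).
\end{itemize}

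Finally,~(\(\neg\bot\)) is an existence statement: it only asserts that some system is not impossible, for instance a freshly prepared particle that can actually be detected. This is an assumption about the world, not a consequence of the rules, and I will state it as such.
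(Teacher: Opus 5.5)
\textbf{Gap: the inductive step of your auxiliary lemma.} Your case analysis matches the paper's appendix for (\(\top\)), (\(\leq\)), (\(\pi_\bot\)), (\(u_i\)) and (\(u_e\)). You both treat (\(\neg\bot\)) as an extra assumption. Your derivation of (\(\pi_c\)) from \eqref{c3} is correct and covers a case the appendix omits.

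The step that fails is the inductive step of your auxiliary lemma. Rule \eqref{c4} turns information about \(\pi_\psi(s)\) into information about \(\pi_{U(\psi)}\circ U(s)\), which is a \emph{different} system, and \((R_5)\) only concerns bare systems. So no unitary transport can move impossibility along the basis \(\{\psi_i\}\) onto another basis of the same span for the same \(s\). In dimension \(\geq 3\) your bases for (\(\wedge\)) and (\(\pi_i\)) usually have more than two elements, so as written these two axioms are not established.

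The missing idea is the splitting the paper uses. If \(A\leq B^\bot\) and \(v\) is a ray below \(A\vee B\), then \(v\leq(v\sas A)\vee(v\sas B)\), and the two pieces are orthogonal rays (or one is \(\bot\), in which case \(v\) lies in the other).
\begin{itemize}
\item To repair the lemma, take \(A=\psi_1\vee\dots\vee\psi_{k-1}\) and \(B=\psi_k\). The induction hypothesis, \eqref{c6} and monotonicity then give the inductive step.
\item More simply, drop the lemma and split directly. For (\(\wedge\)) take \(A=p^\bot\), \(B=q^\bot\wedge p\). For (\(\pi_i\)) take \(A=q^\bot\), \(B=p^\bot\wedge q\), handling the two pieces by \eqref{c1} and \eqref{c2} as you already do. Only the binary rule \eqref{c6} is then needed.
\end{itemize}

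\textbf{Monotonicity.} Your concern here is justified, and you are more careful than the paper. Its last steps in (\(\wedge\)) (``using \eqref{c3}'') and in (\(\pi_i\)) (``using \eqref{c6}'') silently pass from \(\pi_w(s):\Imp\) to \(\pi_v(s):\Imp\) for a ray \(v\leq w\). This step cannot be derived from \eqref{c1}--\eqref{c6}, as the following countermodel shows.
\begin{itemize}
\item Call a circuit impossible iff its last projection \(\pi_r\) has \(\dim r\neq 1\), or \(r\) is a ray lying in \(S(t)\), where \(t\) is the input of that projection.
\item Projection-free circuits are possible.
\item Set \(S(U(t))=U[S(t)]\) and \(S(\pi_p(t))=\{\varphi\in S(t)\mid\varphi\leq p\}\cup\{\varphi\mid\varphi\not\leq p\}\).
\end{itemize}
All six rules hold. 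Yet for orthogonal rays \(e_1,e_2\) and a system with \(S(s)=\{e_1,e_2\}\), one has \([s:e_1^\bot]\) and \([s:e_2^\bot]\) but not \([s:e_1^\bot\wedge e_2^\bot]\), so (\(\wedge\)) fails. The principle ``an impossible system stays impossible after a projection'' must therefore be added explicitly, as you propose. Combined with \eqref{c2}, it gives exactly the monotonicity you need.

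\textbf{Minor point on (\(u_e\)).} Use \eqref{c4} for \(U\), read right to left. Using \eqref{c4} for \(U^{-1}\) would produce the system \(U^{-1}(U(x))\), which the rules never identify with \(x\).
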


\begin{proof}
	The proof is developped in section~\ref{proof:valid_axioms}.
\end{proof}

The assertions of figure~\ref{fig:rules}, which will serve as axioms of our formal theory, can be described as follows:
\begin{itemize}
	\item The first one,~\ref{ax:negbot}, states that there are some systems that are not impossible, i.e. it is possible do detect a particle after projecting on a vector ray. This assertion is different in nature from the following ones.
	\item The next three axioms, namely~\ref{ax:top},~\ref{ax:leq} and~\ref{ax:wedge}, describe the set of properties verifies by a quantum system. For~\ref{ax:leq}, this applies to all~\(p, q \in \mb H_d\) such that~\(p \leq q\) and for~\ref{ax:wedge},~\(p\) and~\(q\) have to be compatible, that is the corresponding projectors commute (or, equivalently, one has~\(p = (q \wedge p) \vee (q^\bot \wedge p)\) using ortholattice notations).
	\item The next three relate the properties verified by a quantum system before and after being applied a projection.
	\item The last axiom relate the properties verifies by a quantum system before and after being applied a unitary operator.
\end{itemize}

Let us insist on the fact that in these axioms, the quantifications apply to quantum systems and not properties (that is elements of~\(\mb H_d\) and of~\(\mb V_{\! d}\)).

\section{A Formal Theory for Reasoning About Quantum Systems}

To further study the consequences of the axioms listed in figure~\ref{fig:rules}, let us now precisely define the whole logical framework we will use, in the form of a theory of first-order logic.


\begin{definition}
	For all~\(d \in \mb N^\star\), we define~\(\mc L_d\) as the formal language made of:
	\begin{description}
		\item [Functions] We have two types of functions:
		      \begin{enumerate}
			      \item For each~\(U \in \mb U_d\), a unary function~\(u\);
			      \item For some~\(p \in \mb H_d\), a unary function~\(\pi_p\) representing an orthogonal projector.
		      \end{enumerate}
		\item [Relations] For each~\(p \in \mb H_d\), a unary relation~\([ \, \cdot \, : p]\).
	\end{description}
\end{definition}

Regarding projectors, we do not impose a projector for each~\(p \in \mb H_d\): it is only required to have at least one projector on a vector ray, and one projector on a two-dimensional subspace. Without loss of generality, with an adequate use of unitary operators, we can consider that we have at least a projector for each vector ray, and for each two-dimensional subspace.


\bigskip

We recall that given a formal language~\(\mc L\), a~\(\mc L\)-structure~\(\mc M\) consists of a set~\(\mb M\) called the \emph{domain} of the structure, and an interpretation of each function and relation of the language: for each formal function~\(f\) of arity~\(a\), a function~\(f^{\mc M} : \mb M^a \rightarrow \mb M\) and for each formal relation~\(R\) of arity~\(a\), a relation~\(R^{\mc M} \subseteq \mb M^a\).

Two~\(\mc L_d\)-structures will be central in our study the \emph{Hilbert model}\footnote{The term \emph{model} will become clear very soon.}~\(\mc H_d\) with domain~\(\mb H_d\) and the \emph{Vector model}~\(\mc V_d\) with domain~\(\mb V_{\! d} \cup \{\bot\}\). They are defined the following way:

\begin{itemize}
	\item Let us start with the Hilbert model~\(\mc H_d\). Its domain is the set~\(\mb H_d\) of subspaces of~\(\mb C^n\) and:
	      \begin{align*}
		      \fall {p \in \mb H_d} \fall{U \in \mb U_d}      &  & U^{\mc H_d}(p)     & \mathrel{:=} \bigl\{ U(v) \bigm| v \in p \bigr\} \\
		      \fall {p, q \in \mb H_d}                        &  & \pi_q^{\mc H_d}(p) & \mathrel{:=} p \sas q                            \\
		      \mathrm{and \qquad {}} \fall {p, q \in \mb H_d} &  & [p : q]^{\mc H_d}  & \mathrel{:=} p \leq q
	      \end{align*}
	      Here, the \emph{Sasaki projection} \(p \sas q := q \cap (q^\bot \oplus p)\) is the lattice-theoretic equi\-valent of the orthogonal projection: if~\(\Pi_Q\) denotes the orthogonal projector on~\(Q\),~then
	      \[ P \sas Q = \bigl\{\Pi_Q(v) \bigm| v \in P \bigr\} \]

	\item The Vector model~\(\mc V_d\) has domain~\(\mb V_{\! d}^\star \mathrel{:=} \mb V_{\! d} \cup \{\bot\}\) i.e.~the set~\(\mb V_{\! d}\) of vector rays (i.e. one-dimensional subspaces) of~\(\mb C^n\) together with the null subspace and~where:
	      \begin{align*}
		      \fall {P \in \mb V_{\! d}^\star} \fall{U \in \mb U_d}  &  & U^{\mc V_d}(P)     & \mathrel{:=} \bigl\{ U(v) \bigm| v \in P \bigr\} \\
		      \fall {P \in \mb V_{\! d}^\star} \fall {Q \in \mb H_d} &  & \pi_Q^{\mc V_d}(P) & \mathrel{:=} P \sas Q                            \\
		      \fall {P \in \mb V_{\! d}^\star} \fall {Q \in \mb H_d} &  & [P : Q]^{\mc V_d}  & \mathrel{:=} P \subseteq Q
	      \end{align*}
\end{itemize}

These two models are closely related, as suggested by the similarity of their definitions. We formalize this with the following proposition:
\begin{proposition}
	\(\mc V_d\) is a substructure of~\(\mc H_d\), i.e.~loosely speaking, it is the restriction of~\(\mc H_d\) to~\(\mb V_{\! d}^\star\).
\end{proposition}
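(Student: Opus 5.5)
The plan is to check the standard definition of a substructure directly. This amounts to three things. First, the domain \(\mb V_{\! d}^\star = \mb V_{\! d} \cup \{\bot\}\) must be a subset of \(\mb H_d\). Second, this subset must be closed under the interpretation in \(\mc H_d\) of every function symbol of \(\mc L_d\), and on it the interpretations in \(\mc V_d\) must coincide with those in \(\mc H_d\). Third, each relation \([\,\cdot\,:q]^{\mc V_d}\) must equal \([\,\cdot\,:q]^{\mc H_d} \cap \mb V_{\! d}^\star\). The inclusion \(\mb V_{\! d}^\star \subseteq \mb H_d\) is immediate, since rays and the null subspace are subspaces of \(\mb C^d\).

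For the relations, the two definitions read \(P \subseteq Q\) and \(P \leq Q\). The order on \(\mb H_d\) is inclusion, so they agree on \(\mb V_{\! d}^\star\).

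For the unitary symbols, the formulas defining \(U^{\mc V_d}\) and \(U^{\mc H_d}\) are literally the same. It only remains to check closure. A unitary \(U\) is a linear bijection, so it maps a one-dimensional subspace \(\mathbb C v\) (with \(v \neq 0\)) to \(\mathbb C\, U(v)\), again a ray, and it maps \(\{0\}\) to \(\{0\}\).

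The one step needing a genuine argument is closure under the projectors, and that is where I expect the only real work. The formula \(P \sas Q\) is again the same in both structures. I would use the characterization recalled in the paper, \(P \sas Q = \{\Pi_Q(v) \mid v \in P\}\), the image of \(P\) under the linear map \(\Pi_Q\). If \(P = \bot\), the image is \(\{0\} = \bot\). If \(P = \mathbb C v\), the image is \(\mathbb C\, \Pi_Q(v)\). This is a ray when \(\Pi_Q(v) \neq 0\), and \(\bot\) when \(v \in Q^\bot\). In every case it lies in \(\mb V_{\! d}^\star\).

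In short, the image of a subspace of dimension at most one under a linear map still has dimension at most one. This is precisely why \(\bot\) was added to the domain of \(\mc V_d\): the projector symbols would otherwise fail closure. Combining the three checks shows that \(\mc V_d\) is the substructure of \(\mc H_d\) induced on \(\mb V_{\! d}^\star\).
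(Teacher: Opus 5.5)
Your proof is correct and follows the same route as the paper. The paper's proof is only the one-line remark that a unitary sends a ray to a ray and that the Sasaki projection of a ray has dimension at most one. You make that closure argument explicit via \(P \sas Q = \Pi_Q(P)\), and you also write out the routine checks the paper leaves implicit: the inclusion of domains and the agreement of the relations and function interpretations on \(\mb V_{\! d}^\star\).
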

\begin{proof}
	It can be remarked, in particular, that the image of a vector ray by a unitary operator is a vector ray, and its Sasaki projection has dimension at most~one.
\end{proof}


\bigskip

Now that the language is specified, the propositions listed in figure~\ref{fig:rules} form the axioms and axiom schemata of a first-order theory on~\(\mc L_d\) which we will cal~\emph{Possibilistic Quantum Mechanics} (or~PQM for short, or~\(\mathrm{PQM}_d\) if we need an explicit reference to the dimension). It is important to note that this first-order theory doesn't have equality, and that quantification applies to quantum systems only, and not to elements of~\(\mb H_d\).

\begin{proposition}
	For all~\(n \in \mb N\), both \(\mc H_d\) and \(\mc V_d\) are models of~\(\mathrm{PQM}_d\).
\end{proposition}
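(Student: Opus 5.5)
The plan is to check each axiom of figure~\ref{fig:rules} directly in \(\mc H_d\), and then to transfer everything except~\ref{ax:negbot} to \(\mc V_d\) using the previous proposition. All axioms other than~\ref{ax:negbot} are universal sentences, i.e.\ of the form \(\forall x\, \phi(x)\) with \(\phi\) quantifier-free, and universal sentences are preserved under passing to substructures. So once they hold in \(\mc H_d\), they hold in \(\mc V_d\). For~\ref{ax:negbot} it suffices to exhibit a single witness in each structure: any vector ray \(\psi\) works in both, since \(\psi \not\leq \bot\) (this uses \(d \geq 1\)).

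In \(\mc H_d\), where \([x:p]\) means \(x \leq p\), the verification is as follows.
\begin{itemize}
\item Axiom~\ref{ax:top} is immediate.
\item Axiom~\ref{ax:leq} is transitivity of inclusion.
\item Axiom~\ref{ax:wedge} holds because \(x \leq p\) and \(x \leq q\) give \(x \leq p \wedge q\); compatibility of \(p\) and \(q\) is not even needed here.
\item Axioms~\ref{ax:ui} and~\ref{ax:ubot} hold because a unitary \(U\) acts on \(\mb H_d\) as an order automorphism with inverse \(U^{-1}\).
\end{itemize}

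For the projection axioms I would use the description \(x \sas q = \{\Pi_q(v) \mid v \in x\}\).
\begin{itemize}
\item For~\ref{ax:pii}, if \(x \leq p\) then \(\Pi_q(x) \subseteq \Pi_q(p)\). So \(x \sas q \leq p \sas q\), which says \([\pi_q(x) : p \sas q]\).
\item For~\ref{ax:pibot}, \(x \sas q = \bot\) means \(\Pi_q v = 0\) for all \(v \in x\). That is exactly \(x \leq q^\bot\).
\item For~\ref{ax:pic}, with \(p \leq q\), suppose \((x \sas q)\sas p = \bot\). Since \(\Pi_p \Pi_q = \Pi_p\) when \(p \leq q\), we get \(\{\Pi_p\Pi_q v\} = \{\Pi_p v\}\). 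Hence \(x \sas p = (x\sas q)\sas p = \bot\).
\end{itemize}

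I must keep the order of composition straight: \(\pi_p\circ\pi_q(x)\) means projecting first on \(q\), then on \(p\).

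The only point needing care is the bookkeeping for \(\mc V_d\), namely that it is closed under the function symbols. The earlier proposition already gives this: images of rays under unitaries are rays, and Sasaki projections of rays (or of \(\bot\)) have dimension at most one. Because the interpretations agree with those of \(\mc H_d\), preservation of universal sentences applies.

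The main obstacle is therefore mostly a matter of correctness of conventions rather than depth. In particular I need to confirm that the lattice formula \(q \cap (q^\bot \oplus p)\) coincides with the image \(\Pi_q(p)\), which the paper states, and I will use that description throughout. The index \(n\) in the statement is presumably a typo for \(d\); I will treat it as such.
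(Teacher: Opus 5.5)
Your proposal is correct and takes the same approach as the paper: check the axioms directly in \(\mc H_d\). The paper only sketches this, spelling out a couple of axioms such as \((\pi_\bot)\) via \(x \sas q = \bot \iff x \leq q^\bot\) and leaving \(\mc V_d\) implicit; your transfer to \(\mc V_d\) by preservation of universal sentences under substructures, plus a separate ray witness for \((\neg\bot)\), supplies exactly that missing step.
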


\begin{proof}
	We recall that a~\(\mc L\)-structure is a \emph{model} of a formal theory on~\(\mc L\) if each sentence of the theory is true with respect to the interpretation of the structure.

	We will not show that each axioms of~\(\mathrm{PQM}_d\) is valid in~\(\mc H_d\), but here are a few justifications. Regarding~\ref{ax:leq}, it translates~as
	\[ \fall {x \in \mb H_d} x \leq \bot \]
	For~\ref{ax:pibot}, it is a classical result in orthomodular lattices:
	\[ \fall {x \in \mb H_d} x \sas q = \bot \iff x \leq q^\bot \]
\end{proof}

The existence of models has, using G\"odel's completeness theorem, a first important consequence:
\begin{corollary}
	For all~\(d \in \mb N\), \(\mathrm{PQM}_d\) is consistent.
\end{corollary}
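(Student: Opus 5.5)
This follows from the preceding proposition together with the soundness direction of G\"odel's completeness theorem. I would argue by contraposition: if \(\mathrm{PQM}_d\) were inconsistent, it would prove some sentence \(\sigma\) together with \(\neg\sigma\). By soundness, every sentence derivable from the theory holds in every model of it. Hence both \(\sigma\) and \(\neg\sigma\) would be true in \(\mc H_d\) (or in \(\mc V_d\)). Since truth in a structure is bivalent, this is impossible. So it suffices to know that at least one model exists, which the previous proposition asserts for both \(\mc H_d\) and \(\mc V_d\).

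The only point needing care is that the previous proposition was justified only for selected axioms. I would therefore check, in \(\mc V_d\) say, the axiom that is not a universal Horn-type statement, namely \ref{ax:negbot}. It requires an element \(x\) with \(x \not\leq \bot\). Any vector ray works, and \(\mb V_{\! d}\) is nonempty because \(d \geq 1\). This is exactly why the statement must exclude \(d = 0\), i.e.\ read \(d \in \mb N^\star\) as in the definition of \(\mc L_d\): for \(d=0\) the domain is \(\{\bot\}\) and \ref{ax:negbot} fails.

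The remaining axioms are routine lattice facts about \(\mb H_d\). \ref{ax:pii} is monotonicity of the Sasaki projection in its first argument. \ref{ax:wedge} is immediate because \(x \leq p\) and \(x \leq q\) give \(x \leq p\wedge q\). \ref{ax:pic} holds because \(x \sas q \sas p\) is the same as \(x\sas p\) when \(p \leq q\). \ref{ax:ui} and \ref{ax:ubot} follow from unitaries being lattice automorphisms. Thus the main obstacle is only this bookkeeping. The logical step itself is immediate: satisfiability implies consistency, and since the logic is first-order without equality, no extra axioms of equality need to be checked.
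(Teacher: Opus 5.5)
Your proposal is correct and follows the paper's own argument: the existence of the models \(\mc H_d\) and \(\mc V_d\) gives consistency, and you rightly note that this uses the soundness direction (satisfiable implies consistent) rather than completeness proper. Your caveat about \(d = 0\) is also right: \(\mc L_d\) is only defined for \(d \in \mb N^\star\), and for \(d = 0\) one has \(\top = \bot\) in \(\mb H_0\), so the axioms \((\top)\) and \((\neg \bot)\) directly contradict each other.
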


That means that it is not possible, in~\(\mathrm{PQM}_d\), to prove both a proposition and its negation, this formal theory is not self-contradicting. The existence of models will also play an important role in the next section, where we show that this theory is complete.

\section{Completeness of the formal theory}

We recall that a theory~\(T\) is \emph{complete} if it is consistent and if for every sentence~\(\sigma\), either~\(T \models \sigma\) or~\(T \models \neg \sigma\) (and consistency implies that one cannot have both). Intuitively, this means that the theory is entirely determined, no sentence has its meaning still ``open''.

Another equivalent characterization of the completeness of a theory is that all its models are elementary equivalent, meaning that they all agree on which sentences are true: if~\(\mc M\) and~\(\mc N\) are models of a complete theory,~then
\[ \forall \, \sigma, \ \mc M \models \sigma \iff \mc N \models \sigma \]



\begin{theorem} \label{theo:complete}
	For all~\(d \in \mb N\), if~\(d \geq 3\) then~\(\mathrm{PQM}_d\) is complete.
\end{theorem}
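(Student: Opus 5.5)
Since~\(\mathrm{PQM}_d\) is consistent (it has the models~\(\mc H_d\) and~\(\mc V_d\)), the task is to show that any two of its models are elementarily equivalent. The plan is to compare every model with the Vector model~\(\mc V_d\) through a back-and-forth argument: an Ehrenfeucht--Fra\"iss\'e game, or a family of partial isomorphisms with the back-and-forth property. Because the language has no equality, the relevant notion is a partial map that preserves every atomic formula \([t(x) : p]\), where \(t\) is an arbitrary term built from unitaries and projectors. Preserving such formulas is enough for elementary equivalence.

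\textbf{Step 1: attach a ray to each element.} For an element~\(x\) of a model~\(\mc M\), let~\(\sigma(x) := \{ p \in \mb H_d \mid \mc M \models [x:p]\}\). Axioms~\ref{ax:top}, \ref{ax:leq} and~\ref{ax:wedge} make~\(\sigma(x)\) an upward-closed set that is closed under meets of compatible elements. I will show that there is a unique \(\hat x \in \mb V_{\! d}^\star\) with~\(\sigma(x) = \{p \mid \hat x \leq p\}\). Knowing \(\hat x = \bot\) forces every \([x:p]\), so a model sees no further difference between such elements, and \(\bot\) plays the role of~\(\bot\) in~\(\mc V_d\).

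This is the main obstacle, and it is where \(d \geq 3\) enters. \(\sigma(x)\) is only known to be closed under \emph{compatible} meets, so one must exclude sets \(\sigma(x)\) that do not come from a single ray. Such a set would behave like a two-valued-type assignment that is not given by a vector, which the Kochen--Specker/Gleason phenomenon forbids in dimension at least~\(3\). My first attempt is to use the projector axioms~\ref{ax:pii}, \ref{ax:pic} and~\ref{ax:pibot}. Fix a vector ray~\(\varphi\). Either \([\pi_\varphi(x):\bot]\) holds, which by~\ref{ax:pibot} gives \([x:\varphi^\bot]\), or it fails. Running this over an orthonormal basis and then over two-dimensional subspaces, using the projectors on planes that the language provides, should show the following: the rays~\(\varphi\) with~\(\neg[x:\varphi^\bot]\) form a set that is closed under the lattice operations, hence equal to the rays of a single subspace, and upward closure together with compatible meets then forces that subspace to be one-dimensional. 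If this direct route fails, I would instead invoke a Kochen--Specker-type argument in~\(\mb C^3\), embedded into~\(\mb C^d\), to exclude the pathological~\(\sigma(x)\).

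\textbf{Step 2: terms act correctly on the ray.} Using~\ref{ax:pii}, \ref{ax:pic} and~\ref{ax:pibot}, check that \(\widehat{\pi_q(x)} = \hat x \sas q\). Using~\ref{ax:ui} and~\ref{ax:ubot}, check that \(\widehat{U(x)} = U(\hat x)\). By induction on terms it follows that \(\mc M \models [t(x):p]\) if and only if \(\mc V_d \models [t(\hat x):p]\), so the map \(x \mapsto \hat x\) preserves all atomic formulas.

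\textbf{Step 3: build the back-and-forth system and conclude.} The map \(x \mapsto \hat x\) is a homomorphism from~\(\mc M\) to~\(\mc V_d\) that preserves and reflects every relation. Its image is closed under all function symbols, since it is the image of a substructure. It also contains some ray, by~\ref{ax:negbot}, and then contains all rays, because any two rays are related by a unitary. Finally it contains~\(\bot\), as \(\pi_{\varphi^\bot}(\varphi) = \bot\). So the map is surjective. In a language without equality, a surjective strong homomorphism preserves the truth of all first-order formulas, which I will verify by a routine induction that handles quantifiers using surjectivity. Hence \(\mc M \equiv \mc V_d\) for every model~\(\mc M\), all models are elementarily equivalent, and \(\mathrm{PQM}_d\) is complete.
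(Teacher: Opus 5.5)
\textbf{Step~1 is false, and the route built on it cannot work.} The Hilbert model \(\mc H_d\) is itself a model of \(\mathrm{PQM}_d\). In it, any element \(x\) of dimension at least~2 (e.g.\ \(x = \top\)) has \(\sigma(x) = \{p \mid x \leq p\}\), and this set is not generated by any \(v \in \mb V_{\! d}^\star\). In fact no strong homomorphism \(h : \mc H_d \to \mc V_d\) exists: \([h(\top) : h(\top)]\) holds in \(\mc V_d\), while \([\top : h(\top)]\) fails in \(\mc H_d\) because \(h(\top) \neq \top\).

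What is true (Theorem~\ref{theo:kappa}) is that \(\sigma(x)\) has a least element \(\kappa(x)\), and this is an \emph{arbitrary} subspace. Your intermediate claim is also wrong. Already in \(\mc V_d\) at a ray \(x = p\), the rays \(\varphi\) with \(\neg[x : \varphi^\bot]\) are exactly those not orthogonal to \(p\), and they are not the rays of any subspace. The meaningful set is the rays with \([x : \varphi^\bot]\), namely those of \(\kappa(x)^\bot\).

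Principality itself is only gestured at in your proposal. The paper gets it from two concrete lemmas:
\begin{enumerate}
\item A filter closed under compatible meets that contains two distinct rays contains \(\bot\). This is the ellipse iteration inside a 3-dimensional subspace, and it is where \(d \geq 3\) enters.
\item Two incompatible members \(P, Q\) of \(\sigma(m)\) are never minimal. Take \(u\) an eigenvector of \(\Pi_P \Pi_Q|_P\) with eigenvalue in \(\mathopen]0,1\mathclose[\). The plane \(C = \vect(u, \Pi_Q u)\) is compatible with both \(P\) and \(Q\). So \(\sigma(\pi_C(m))\) contains the distinct rays \(\vect(u)\) and \(\vect(\Pi_Q u)\), hence contains \(\bot\). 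This gives \([m : C^\bot]\), and then \(P \wedge C^\bot < P\) lies in \(\sigma(m)\).
\end{enumerate}
A minimal element exists by finite dimension, and it is unique by these lemmas.

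\textbf{Even after correcting Step~1, an ingredient is missing.} Your Step~2 and your argument that every ray is hit match the paper. But \(\kappa\) now lands in \(\mc H_d\) and is not onto; for \(\mc M = \mc V_d\) its image is just \(\mb V_{\! d}^\star\). So your surjectivity argument only gives \(\mc M \equiv \kappa(\mc M)\), where \(\kappa(\mc M)\) is some substructure of \(\mc H_d\) containing \(\mb V_{\! d}^\star\). You still need a reason why all such substructures are elementarily equivalent, that is, a passage from subspaces to rays.

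The paper supplies this in three moves:
\begin{enumerate}
\item Since every atomic formula has a single variable, it suffices to treat sentences \(\exists x\, \varphi(x)\) with \(\varphi\) a conjunction of literals.
\item In \(\mc H_d\), rewrite \([\pi_q(x) : p]\) as \([x : p \hook q]\) and \([U(x) : p]\) as \([x : U^{-1}(p)]\).
\item If a subspace \(r\) satisfies \(r \leq p_i\) and \(r \not\leq q_j\) for all \(i, j\), then some ray does too. This holds because \(\bigwedge_i p_i\) is not a finite union of its proper subspaces \(\bigl(\bigwedge_i p_i\bigr) \wedge q_j\).
\end{enumerate}
This gives \(\mc H_d \models \exists x\,\varphi \Rightarrow \mc V_d \models \exists x\,\varphi\), which closes the cycle \(\mc M \models \exists x\,\varphi \Rightarrow \mc H_d \models \exists x\,\varphi \Rightarrow \mc V_d \models \exists x\,\varphi \Rightarrow \mc M \models \exists x\,\varphi\). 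Your proposal has no counterpart to this step.
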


The rest of this article will be largely devoted to the proof of this theorem, and to some of its consequences. This will be done by showing the elementary equivalence between models in three steps: first, we will show that for any sentence~\(\sigma\), we have~\(\mc H_d \models \sigma \implies \mc V_d \models \sigma\). Then, for any sentence~\(\sigma\) and any model~\(\mc M\), \(\mc M \models \sigma \implies \mc H_d \models \sigma\) and finally \(\mc V_d \models \sigma \implies \mc M \models \sigma\).


But first, let us remark that our theory, besides having no equality, also has no constants and all its relations and functions are of arity one. As such, it is a close relative of monadic first-order logics. An important consequence of this property that in our proof of completeness, we can restrict ourselves to sentences of the form~\(\exists \, x, \varphi(x)\) where~\(\varphi(x)\) is itself of the form
\[ [\overline f_1(x) : p_1] \wedge \cdots \wedge [\overline f_n(x) : p_n] \wedge \neg [\overline g_1(x) : q_1] \wedge \cdots \wedge \neg [\overline g_m(x) : q_m] \]
where the~\(p_i\)'s and the~\(q_i\)'s are elements of~\(\mb H_d\), and the~\(\overline f_i\)'s and the~\(\overline g_i\)'s are compositions of finitely many unitary functions.

In the following, we will fix such a~\(\varphi(x)\).
Moreover, for any model~\(\mc M\) with domain~\(\mb M\) and any~\(m \in \mb M\), \(\varphi(m)\) will denote
\[ [\overline f_1^{\mc M}(m) : p_1]^{\mc M} \wedge \cdots \wedge [\overline f_n^{\mc M}(m) : p_n]^{\mc M} \wedge \neg [\overline g_1^{\mc M}(m) : q_1]^{\mc M} \wedge \cdots \wedge \neg [\overline g_m^{\mc M}(m) : q_m]^{\mc M} \]
In particular, the semantics of first-order logic states~that
\[ \mc M \models \exists \, x, \varphi(x) \iff \exists \, m \in \mb M, \ \mc M \models \varphi(m) \]

\subsection{From subspaces to vectors}

In~\(\mc H_d\) (and in~\(\mc V_d\) as it is a substructure of~\(\mc H_d\)), we have
\[ [\pi_q(m) : p]^{\mc H} \iff m \sas q \leq p \iff m \leq p \hook q \iff [m : p \hook q]^{\mc H}\]
where~\(p \hook q\) is defined as~\(q^\bot \vee \bigl(p \wedge q)\), following from classical properties of orthomodular lattices. Similarly, we have
\[ [ U(m) : p ]^{\mc H} \iff [m : U^{-1}(p)]^{\mc H} \]
so that without loss of generality, in order to show
\[ \mc H_d \models \exists \, x, \varphi(x) \implies \mc V_d \models \exists \,  x, \varphi(x)\]
we can consider that \(\varphi(x)\) is of the form
\[ \varphi(x) =  [x : p_1] \wedge \ \cdots \ \wedge [x : p_n] \wedge \neg [x : q_1] \wedge \ \cdots \ \wedge \neg [x : q_m] \]
where only verifications statements appear.

Now, suppose that~\(\mc H_d\models \exists \, x, \varphi(x)\) and let~\(r \in \mb H_d\) be such that~\(\mc H_d \models \varphi(r)\), that~is
\[ \forall \, i \in \llbracket 1, n \rrbracket, r \leq p_i \]
--- or, equivalently,~\(r \leq \bigwedge_{i \in \llbracket 1, n\rrbracket} p_i \) --- and
\[ \forall \, j \in \llbracket 1, m\rrbracket, \ r \not \leq q_j \]
By putting~\(p_\infty = \bigwedge p_i\), this means that for all~\(j\), \(p_\infty \wedge q_j\) is a strict subspace of~\(p_\infty\). Now, it is a well know fact that a vector space cannot be obtained as a finite or countable union of strict subspaces (it is a consequence, for instance, of the Baire Category Theorem). In other words, we have\footnote{Here, we switch to set-like notations because we momentarily view subspaces as sets of vectors.}
\[ \bigcup_j \, (p_\infty \cap q_j) \neq p_\infty \]
so that there exists a non-zero vector~\(u \in p_\infty \setminus \bigcup_j \, (p_\infty \cap q_j)\), and we then~have
\[ \mc V_d \models \varphi\bigl(\vect(u)\bigr)\]
We have thus shown that for all~\(\varphi\),
\[ \mc H_d \models \exists \, x, \varphi(x) \implies \mc V_d \models \exists \, x, \varphi(x) \]

\subsection{From any model to subspaces}


Let~\(\mc M\) be any model of~\(\mathrm{PQM}_d\). We are going to show that there exists a function~\(\kappa : \mb M \mapsto \mb H_d\) such that for all~\(m \in \mb M\), all~\(p \in \mb H_d\) and all compositions~\(\overline f\) of unitary operators and projections,
\[
	\mc M \models [\overline f^{\mc M}(m) : p]^{\mc M} \iff \mc H_d \models [\overline f^{\mc H_d}\bigl(\kappa(m)\bigr): p]^{\mc H}
\]
so that
\[
	\forall \, m \in \mb M, \ \mc M \models \varphi(m) \iff \mc H_d \models \varphi\bigl(\kappa(m)\bigr)
\]
and, consequently,
\[
	\mc M \models \exists \, x, \varphi(x) \implies \mc H \models \exists \, x, \varphi(x)
\]

First, for any~\(m \in \mb M\), we define the~\emph{filter} associated to~\(m\) as the~set
\[ F^{\mc M}(m) = \bigl\{ p \in \mb H_d \bigm| [m : p]^{\mc M} \bigr\} \]

\begin{proposition}
	For all~\(m \in M\), we have:
	\begin{enumerate}
		\item \(\top \in F^{\mc M}(m)\);
		\item \(\forall \ p, q \in \mb H_d, \ \bigl(p \in F^{\mc M}(m) \mathrel{\mathrm{and}} p \leq q\bigr) \implies q \in F^{\mc M}(m)\);
		\item \(\fall {p, q \in F^{\mc M}(m)} p \sas q \in F^{\mc M}(m)\).
	\end{enumerate}
\end{proposition}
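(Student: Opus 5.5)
The three items follow directly from the axioms of figure~\ref{fig:rules}, read at the element \(m\) of \(\mb M\). Since \(\mc M\) is a model of \(\mathrm{PQM}_d\), every instance of the axioms, specialised to \(x = m\), holds in \(\mc M\).

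For the first item, I would use axiom~\ref{ax:top}, which gives \([m : \top]^{\mc M}\), that is \(\top \in F^{\mc M}(m)\).

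For the second item, take \(p \leq q\) with \(p \in F^{\mc M}(m)\). The instance of the schema~\ref{ax:leq} for this pair gives \([m:p]^{\mc M} \rightarrow [m:q]^{\mc M}\), hence \(q \in F^{\mc M}(m)\).

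The third item is the only one needing an argument, and the plan is to write the Sasaki projection as a meet of two compatible elements. Recall that \(p \sas q = q \wedge (q^\bot \vee p)\). Let \(p, q \in F^{\mc M}(m)\).

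First, since \(p \leq q^\bot \vee p\), the second item gives \(q^\bot \vee p \in F^{\mc M}(m)\).

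Second, I check that \(q\) and \(q^\bot \vee p\) are compatible. In an orthomodular lattice, \(a \leq b\) implies that \(a\) and \(b\) are compatible, and compatibility is preserved by orthocomplementing one argument. We have \(q^\bot \leq q^\bot \vee p\), so \(q^\bot \comp q^\bot \vee p\), and therefore \(q \comp q^\bot \vee p\). Concretely, the projector onto \(q^\bot \vee p\) leaves \(q^\bot\) invariant, hence commutes with \(\Pi_q\).

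Third, the instance of axiom~\ref{ax:wedge} for the compatible pair \((q,\, q^\bot \vee p)\) gives \([m : q \wedge (q^\bot \vee p)]^{\mc M}\), that is \(p \sas q \in F^{\mc M}(m)\).

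The only point needing care is this compatibility check, since~\ref{ax:wedge} is available only for compatible pairs. It rests on the standard orthomodular facts just recalled, or directly on the commutation of the corresponding projectors in \(\mb C^d\).
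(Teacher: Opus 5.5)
Your proof is correct and follows the paper's own argument. Items 1 and 2 come straight from axioms (\(\top\)) and (\(\leq\)), and item 3 writes \(p \sas q = q \wedge (p \vee q^\bot)\), puts \(p \vee q^\bot\) into the filter by (\(\leq\)), and applies (\(\wedge\)) to the compatible pair \(q\), \(p \vee q^\bot\); your explicit check of that compatibility (via \(q^\bot \leq q^\bot \vee p\) and commuting projectors) fills in a step the paper only asserts.
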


\begin{proof}
	The first two statements follow directly from axioms~\((\top)\) and~\((\leq)\).
	The third one follows from~\((\leq)\) and~\((\wedge)\), and the fact that for all~\(p, q \in \mb H_d\),  since \(p \vee q^\bot\) and~\(q\) are compatible and one has~\(p \sas q = q \wedge (p \vee q^\bot)\).
\end{proof}

\begin{proposition} \label{prop:no_two_atoms}
	For all~\(m \in M\), if~\(F^{\mc M}(m) \cap \mb V\) contains at least two distinct elements, then necessarily~\(\bot \in F^{\mc M}(m)\).
\end{proposition}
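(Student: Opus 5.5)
The plan is to combine two distinct atoms \(\varphi, \psi\) in \(F^{\mc M}(m)\) using the closure properties of the filter from the previous proposition. If two atoms are compatible (commuting projectors) then, being distinct rays, they are orthogonal, so \(\varphi \wedge \psi = \bot\). Axiom~\((\wedge)\) then gives \([m : \bot]\) directly. So the work is to reduce the general case to this compatible one.

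For arbitrary distinct atoms \(\varphi \neq \psi\), I would use item~3 of the previous proposition: \(\varphi \sas \psi \in F^{\mc M}(m)\). Since \(\varphi\) is a ray, \(\varphi \sas \psi\) is either \(\bot\) (done) or \(\psi\) itself, which yields nothing new. So Sasaki-projecting atoms onto atoms is useless, and I would project onto larger subspaces instead.

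Let \(a = \varphi \vee \psi\), a two-dimensional subspace. Choose the ray \(\chi \leq a\) orthogonal to \(\psi\); it is a ray distinct from \(\varphi\) unless \(\varphi \perp \psi\), which is the compatible case already handled. Set \(q = \chi^\bot\), a hyperplane containing \(\psi\) but not \(\varphi\). Then \(\psi \leq q\), so \(q \in F^{\mc M}(m)\) by upward closure. Now compute \(\varphi \sas q\). Since \(\varphi \not\perp \chi\) (as \(\varphi \neq \psi\) within the plane \(a\)), \(\varphi \not\leq q^\bot\), so this projection is a ray. Projecting a vector of \(a\) onto \(\chi^\bot\) kills its \(\chi\)-component and leaves its \(\psi\)-component, hence \(\varphi \sas q = \psi\) again. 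That is also useless, so I would instead project in the other direction.

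Take \(q = \psi^\bot \vee \varphi\)? The simplest choice is the hyperplane \(\chi' = \varphi^\bot\wedge a\) (the ray in \(a\) orthogonal to \(\varphi\)) and \(r = \chi'^\bot\)... This also just returns atoms. The key realisation is that Sasaki projections of an atom lying in the filter only ever produce atoms \(\leq\) the target, so I must exploit \((\wedge)\) with a compatible pair. Concretely: \(\varphi \vee \psi^\bot\) and \(\psi\) are compatible, since \(\psi^\bot \leq \varphi \vee \psi^\bot\). Both are in the filter, the first by upward closure from \(\varphi\). Their meet is \(\psi \wedge (\varphi \vee \psi^\bot) = \varphi \sas \psi = \psi\), again nothing.

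I therefore expect the main obstacle to be producing a compatible pair with meet \(\bot\) from two non-orthogonal atoms. Here I would use \(d \geq 3\). Pick a ray \(\theta \perp a\) and consider \(\varphi \vee \theta\) and \(\psi \vee \theta'\) for suitable choices, or better, the hyperplanes \(H_1 \supseteq \varphi\) and \(H_2 \supseteq \psi\). I would choose \(H_1\) and \(H_2\) compatible with each other: \(H_1 = e_1^\bot\) and \(H_2 = e_2^\bot\) for orthonormal \(e_1, e_2\), with \(\varphi \perp e_1\) and \(\psi \perp e_2\). Then \(H_1 \wedge H_2 \in F^{\mc M}(m)\), a codimension-2 subspace. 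I would iterate this trick with an orthonormal frame, arranging that \(\varphi\) and \(\psi\) each lie in complementary coordinate subspaces, so that \(\varphi \leq A\) and \(\psi \leq B\) with \(A\) and \(B\) commuting and \(A \wedge B = \bot\). For two rays in \(\mb C^d\) with \(d \geq 2\) this is arranged by taking \(A = \varphi\) and \(B = \varphi^\bot\), but then \(\psi \not\leq B\). So the filter must instead be enlarged using \((\pi_i)\)/\((\pi_\bot)\)-type reasoning, which is where I expect the genuine difficulty, and the first thing I would try there.
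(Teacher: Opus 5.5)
Your proposal does not reach a proof. The orthogonal case is handled correctly, but for non-orthogonal atoms you end by conceding that the ``genuine difficulty'' remains open. The plan you were steering toward also cannot work as stated. If \(A\) and \(B\) are compatible with \(A \wedge B = \bot\), then \(A = (A \wedge B) \vee (A \wedge B^\bot) \leq B^\bot\), so \(\varphi \leq A\) and \(\psi \leq B\) would force \(\varphi \perp \psi\). Hence, for non-orthogonal \(\varphi, \psi\), no single round of upward closure followed by a compatible meet can produce \(\bot\); you have to manufacture \emph{new} elements of \(F^{\mc M}(m)\) and iterate. Nor do you need \((\pi_i)\) or \((\pi_\bot)\): they only move the question to \(\pi_q^{\mc M}(m)\), where you would need the present statement again. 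The paper's argument uses only \((\leq)\), \((\wedge)\) and a third dimension.

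The missing idea is essentially your \(H_1 \wedge H_2\) step, used to produce atoms rather than \(\bot\). Choose an orthonormal basis (and phases) in which \(\varphi, \psi\) are spanned by \(u_\pm = (\pm a, 0, 1, 0, \ldots, 0)\) with \(a \in \mathopen]0,1\mathclose[\); this needs \(d \geq 3\). For real \(w = (x, y, 1, 0, \ldots, 0)\), the subspaces \(\vect(u_+, w)\) and \(\vect(u_-, w)\) belong to \(F^{\mc M}(m)\) by \((\leq)\). A direct computation shows they are compatible whenever \(x^2 + (1-a^2)y^2 = a^2\), and their meet is then \(\vect(w)\), which lies in \(F^{\mc M}(m)\) by \((\wedge)\). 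In dimension 3 these planes are exactly your hyperplanes \(e_1^\bot, e_2^\bot\) with \(e_1 \perp e_2\), and \(H_1 \wedge H_2\) is precisely such a new atom. Now take \(w\) and \((-x, -y, 1, 0, \ldots, 0)\) on this ellipse and rotate the first two coordinates. This gives a new pair of atoms of the same form, with parameter \(b = \sqrt{x^2+y^2}\), which ranges over \([a, f(a)]\), where \(f(a) = a/\sqrt{1-a^2} > a\) is the semi-major axis. Since \(f(x)/x = 1/\sqrt{1-x^2} \geq 1/\sqrt{1-a^2} > 1\) on \([a, 1\mathclose[\), finitely many iterations reach \(b = 1\). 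That gives the orthogonal atoms spanned by \((\pm 1, 0, 1, 0, \ldots, 0)\), and your compatible-case argument then yields \(\bot \in F^{\mc M}(m)\).
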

\begin{proof} See \cite{Brunet07PLA,Brunet07Topo}. A simplified proof is presented in subsection~\ref{proof:no_two_atoms}.
\end{proof}

\begin{proposition} \label{prop:incompatible}
	In dimension~\(d \geq 3\), if~\(p, q \in F^{\mc M}(m)\) are incompatible, then neither~\(p\) nor~\(q\) is minimal in~\(F^{\mc M}(m)\).
\end{proposition}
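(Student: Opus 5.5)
The plan is to argue by contradiction. Suppose \(p, q \in F^{\mc M}(m)\) are incompatible and \(p\) is minimal in \(F^{\mc M}(m)\); the case of \(q\) is symmetric. Everything will rest on the closure properties of the filter (containing \(\top\), upward closed, closed under \(\sas\)) and on Proposition~\ref{prop:no_two_atoms}. Minimality immediately gives
\[ \fall {r \in F^{\mc M}(m)} r \sas p = p , \]
because \(r \sas p \leq p\) and \(r \sas p \in F^{\mc M}(m)\). In particular \(\bot \notin F^{\mc M}(m)\): since \(\bot\) is compatible with everything we have \(p \neq \bot\), so \(\bot < p\) would contradict minimality. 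It therefore suffices to produce either \(\bot\) or some element strictly below \(p\) inside \(F^{\mc M}(m)\).

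\textbf{Step 1: the case where \(p\) is a vector ray.} Incompatibility of \(p\) and \(q\) means \(p \not\leq q\) and \(p \not\leq q^\bot\). Hence \(p \sas q\) is a ray, since the orthogonal projection onto \(q\) of a nonzero vector of \(p\) is nonzero. It is also distinct from \(p\), since otherwise \(p \leq q\). Moreover \(p \sas q \in F^{\mc M}(m)\) by closure under \(\sas\). So \(F^{\mc M}(m) \cap \mb V\) contains two distinct atoms, and Proposition~\ref{prop:no_two_atoms} yields \(\bot \in F^{\mc M}(m)\), which is a contradiction.

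\textbf{Step 2: reduce the general case to Step 1.} For arbitrary \(p\), put \(p_1 = p \sas q \in F^{\mc M}(m)\), and note that \(p_1 \neq p\) by incompatibility. Then iterate the alternating projections \(p_{k+1} = (p_k \sas p) \sas q\). Each \(p_k\) lies in the filter, and by minimality \(p_k \sas p = p\) at every stage. I would use this to show that the restriction of \(\Pi_p\) to \(p_1\) must be onto \(p\) and must have trivial kernel, so that \(\dim p_1 = \dim p\). Then I would take the Jordan angles between \(p\) and \(q\) and decompose \(p = a \vee b\), where \(a = p \wedge q\), \(b\) is compatible with \(q\), and the rest consists of two-dimensional "rotation planes" in which the angle is strictly between \(0\) and \(\pi/2\). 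The goal is to extract an element of the filter strictly below \(p\). The natural candidate is \(p \wedge (p_1 \vee p_1^\bot \wedge \cdots)\), built from compatible pairs so that axiom~\((\wedge)\) applies (in the filter form: meets of compatible members stay in \(F^{\mc M}(m)\)). A fallback is to find a subspace \(c \in F^{\mc M}(m)\) compatible with \(p\) such that \(p \wedge c\) is a ray. Then Step 1 applies to that ray, or else \(p \wedge c < p\) directly.

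\textbf{Main obstacle.} Step 2 is where I expect the difficulty, and it is also where the hypothesis \(d \geq 3\) must enter. When \(p\) is not a ray, nothing in the closure properties obviously produces a smaller element: a single Sasaki projection onto \(p\) just returns \(p\). What I would try first is to use the spare dimension to choose a unitary frame in which \(p\) and \(q\) both meet a common three-dimensional subspace nontrivially. Inside that subspace I would reproduce the orthogonality/recombination argument used in Proposition~\ref{prop:no_two_atoms} (the Kochen--Specker-like configuration of~\cite{Brunet07PLA,Brunet07Topo}). This should force the filter to contain two distinct atoms, or a proper compatible refinement of \(p\), in either case contradicting the minimality of \(p\).
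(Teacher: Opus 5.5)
Your Step~1 (the case where $p$ is a ray) is correct. Step~2, however, is not merely unfinished: the toolkit you restrict yourself to cannot prove the statement. You use only properties of the single set $F^{\mc M}(m)$ (it contains $\top$, is upward closed, and is closed under $\sas$ and under compatible meets), plus Proposition~\ref{prop:no_two_atoms} applied to that same set. These properties do not force a least element.

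Here is a counterexample. In $\mb C^3$, take two planes $p, q$ whose normal rays are neither equal nor orthogonal. Then $p$ and $q$ are incompatible, and $p \sas q = q$ and $q \sas p = p$, since $q^\bot \not\leq p$ and $p^\bot \not\leq q$. The set $\{p, q, \top\}$ is upward closed, closed under $\sas$ and under compatible meets, and contains no atom, so Proposition~\ref{prop:no_two_atoms} says nothing about it. Yet $p$ and $q$ are both minimal in it.

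Your alternating projections stall at once ($p_1 = p_2 = \cdots = q$). A Kochen--Specker-type argument inside a three-dimensional subspace cannot help either, since it too only uses $(\leq)$ and $(\wedge)$ on a fixed filter. So the obstruction is not about finding spare dimensions. The hypothesis $d \geq 3$ is needed only inside Proposition~\ref{prop:no_two_atoms}.

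The missing idea is to leave $m$ and use the axioms that link $m$ to another element of the model, namely $(\pi_i)$ and $(\pi_\bot)$. Your Jordan-angle planes are the right objects. The restriction of $\Pi_p \Pi_q$ to $p$ is self-adjoint with spectrum in $[0,1]$, and incompatibility yields an eigenvalue $\lambda \in \mathopen]0,1\mathclose[$ with eigenvector $u$. Put $v = \Pi_q(u)$ and $C = \vect(u, v)$. Since $C = \vect(u, v - \lambda u) = \vect(v, u - v)$ with $0 \neq v - \lambda u \in p^\bot$ and $0 \neq u - v \in q^\bot$, $C$ is compatible with both $p$ and $q$. Hence $p \sas C = p \wedge C = \vect(u)$ and $q \sas C = q \wedge C = \vect(v)$ are distinct rays.

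By $(\pi_i)$, both rays belong to $F^{\mc M}\bigl(\pi_C^{\mc M}(m)\bigr)$. Proposition~\ref{prop:no_two_atoms}, applied to \emph{that} element, gives $[\pi_C(m) : \bot]$. Then $(\pi_\bot)$ gives $[m : C^\bot]$. As $C^\bot$ is compatible with $p$, $(\wedge)$ gives $[m : p \wedge C^\bot]$, and $p \wedge C^\bot < p$ because $u \in p$ but $u \notin C^\bot$. This treats all $p$ uniformly, with your Step~1 as the special case $\dim p = 1$, and the same argument works for $q$.
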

\begin{proof} See proposition~8 in~\cite{Brunet15QPL}. The proof is presented in subsection~\ref{proof:incompatible}.
\end{proof}

\begin{theorem}\label{theo:kappa}
	If~\(d \geq 3\), for every model~\(\mc M\) of \(\mathrm{PQM}_d\) with domain~\(\mb M\), there exists a function~\(\kappa : \mb M \rightarrow \mb H_d\) such that for all~\(m \in \mb M\),
	\[ \fall {p \in \mb H_d} p \in F^{\mc M}(m) \iff \kappa^{\mc M}(m) \leq p \]
	or, equivalently:
	\[ \fall {p \in \mb H_d} [m : p]^{\mc M} \iff [\kappa(m) : p]^{\mc H_d} \]
\end{theorem}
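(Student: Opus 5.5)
The plan is to show that each filter \(F^{\mc M}(m)\) is \emph{principal}, that is, it has a least element, and to take \(\kappa(m)\) to be that least element. Fix \(m \in \mb M\). By the first item of the proposition on filters, \(\top \in F^{\mc M}(m)\), so \(F^{\mc M}(m)\) is nonempty. Every element of \(\mb H_d\) is a subspace of \(\mb C^d\) with dimension between \(0\) and \(d\). I would therefore choose \(p_0 \in F^{\mc M}(m)\) of smallest dimension and set \(\kappa(m) := p_0\). If \(p \in F^{\mc M}(m)\) and \(p \leq p_0\), then \(\dim p \geq \dim p_0\) forces \(p = p_0\). Hence \(p_0\) is minimal in \(F^{\mc M}(m)\) with respect to inclusion, which is the sense of ``minimal'' in Proposition~\ref{prop:incompatible}. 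This step uses the axiom of choice across \(m \in \mb M\), which is harmless.

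Next I would show that \(p_0 \leq q\) for every \(q \in F^{\mc M}(m)\). Take such a \(q\). Since \(d \geq 3\), Proposition~\ref{prop:incompatible} says that if \(p_0\) and \(q\) were incompatible, then \(p_0\) would not be minimal in \(F^{\mc M}(m)\). That contradicts the choice of \(p_0\), so \(p_0 \comp q\). Axiom~\ref{ax:wedge} then gives \([m : p_0 \wedge q]^{\mc M}\), so \(p_0 \wedge q \in F^{\mc M}(m)\). As \(p_0 \wedge q \leq p_0\), minimality gives \(p_0 \wedge q = p_0\), that is, \(p_0 \leq q\). For the converse, if \(p_0 \leq q\) then \(q \in F^{\mc M}(m)\) by the upward-closure item of the filter proposition, which comes from axiom~\ref{ax:leq}. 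Together these give
\[ \fall {p \in \mb H_d} p \in F^{\mc M}(m) \iff \kappa(m) \leq p . \]

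The second formulation in the statement is immediate from this. By definition, \([m:p]^{\mc M}\) holds exactly when \(p \in F^{\mc M}(m)\). In \(\mc H_d\), the relation \([\kappa(m):p]^{\mc H_d}\) is interpreted as \(\kappa(m) \leq p\). The case \(\bot \in F^{\mc M}(m)\) needs no separate treatment, since then \(p_0 = \bot\) and everything holds trivially. Proposition~\ref{prop:no_two_atoms} is not needed here.

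All the real difficulty is concentrated in Proposition~\ref{prop:incompatible}, which I take as given. The only point needing care is that the minimality used there is inclusion-minimality within \(F^{\mc M}(m)\), which is exactly what the dimension argument provides. The same argument also shows that the least element is unique, so \(\kappa\) is well defined independently of the choice.
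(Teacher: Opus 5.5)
Your proof is correct and follows essentially the same route as the paper. A minimal element of \(F^{\mc M}(m)\) exists by finite dimensionality, Proposition~\ref{prop:incompatible} forces compatibility, and axiom~\ref{ax:wedge} together with minimality yields the least element. The only difference is that you compare the minimal \(p_0\) with an arbitrary \(q \in F^{\mc M}(m)\) rather than with a second minimal element, which makes explicit the step from ``unique minimal element'' to ``least element'' that the paper leaves implicit.
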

\begin{proof}
	First, since we work in finite dimension,~\(F^{\mc M}(m)\) has at least one minimal element w.r.t.~inclusion. Suppose that~\(F^{\mc M}(m)\) has two distinct minimal elements~\(p\) and~\(q\). From proposition~\ref{prop:incompatible}, they have to be compatible. But then, one has~\(p \wedge q \in F^{\mc M}(m)\) which contradicts the fact that~\(p\) and~\(q\) are two distinct minimal elements of~\(F^{\mc M}(m)\). As a consequence,~\(F^{\mc M}(m)\) has a unique minimal element, i.e.\ it possesses a least element, which is the expected~\(\kappa^{\mc M}(m)\).
\end{proof}

\begin{proposition}
	For all models~\(\mc M\), the function~\(\kappa\) defined in the previous theorem realizes a strong~\(\mc L_d\)-morphism from~\(\mc M\) to~\(\mc H_d\), that is for all~\(m \in \mb M\),
	\begin{align}
		\forall \, p \in \mb H_d, \quad & [m : p]^{\mc M} \iff [\kappa(m) : p]^{\mc H_d}                            \\
		\forall \, q \in \mb H_d, \quad & \kappa\bigl(\pi^{\mc M}_q(m)\bigr) = \pi^{\mc H_d}_q\bigl(\kappa(m)\bigr) \\
		\forall \, U \in \mb U_d, \quad & \kappa\bigl(u^{\mc M}(m)\bigr) = u^{\mc H_d}\bigl(\kappa(m)\bigr)
	\end{align}
\end{proposition}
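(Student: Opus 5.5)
The plan is to reduce everything to the first equivalence, which is exactly Theorem~\ref{theo:kappa}, and to use the fact that in~\(\mc H_d\) an element is determined by its filter: \(r\) is the least element of \(\{p \mid r \leq p\}\). Hence, to prove \(\kappa(a) = b\) for \(b \in \mb H_d\), it suffices to show that for every~\(p\) we have \([a : p]^{\mc M} \iff b \leq p\). Indeed \(\kappa(a)\) is the least element of \(F^{\mc M}(a)\), and \(b\) is the least element of the same set.

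\textbf{Unitaries.} Axioms~\ref{ax:ui} and~\ref{ax:ubot} together give \([u^{\mc M}(m) : p]^{\mc M} \iff [m : U^{-1}(p)]^{\mc M}\). Applying \(U^{-1}\) to \(U^{-1}(U(p)) = p\) shows that \ref{ax:ui} supplies the direction \([m : U^{-1}(p)] \rightarrow [U(m) : p]\). By Theorem~\ref{theo:kappa} this holds iff \(\kappa(m) \leq U^{-1}(p)\), i.e.\ iff \(U(\kappa(m)) \leq p\). The filter argument then gives \(\kappa(u^{\mc M}(m)) = U(\kappa(m))\).

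\textbf{Projections.} Since \(\kappa(m) \sas q \leq p \iff \kappa(m) \leq p \hook q\), it suffices to prove
\[ [\pi_q(m) : p]^{\mc M} \iff [m : p \hook q]^{\mc M}, \qquad p \hook q = q^\bot \vee (p \wedge q). \]
For the direction from right to left, axiom~\ref{ax:pii} yields \([\pi_q(m) : (p \hook q) \sas q]\). The lattice identity \((q^\bot \vee (p\wedge q)) \sas q = p \wedge q \leq p\) together with~\ref{ax:leq} then concludes.

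The direction from left to right is the main obstacle, because it has to go through~\ref{ax:pic} and~\ref{ax:pibot}. I would proceed as follows.

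\begin{enumerate}
\item Put \(s = q \wedge (p \wedge q)^\bot\), so that \(s \leq q\) and \(s^\bot = p \hook q\). By~\ref{ax:pibot}, it suffices to show \([\pi_s(m) : \bot]\). Since \(s \leq q\), axiom~\ref{ax:pic} reduces this further to showing \([\pi_s(\pi_q(m)) : \bot]\).
\item Let \(n = \pi_q(m)\). Axioms~\ref{ax:top} and~\ref{ax:pii} give \([n : \top \sas q]\), i.e.\ \([n : q]\), and by hypothesis \([n : p]\).
\item Because \(p\) and \(q\) may be incompatible, axiom~\ref{ax:wedge} does not apply directly. Instead I use Theorem~\ref{theo:kappa} for~\(n\): \(\kappa(n) \leq p\) and \(\kappa(n) \leq q\), so \(\kappa(n) \leq p \wedge q\), hence \([n : p \wedge q]\).
\item Finally, \ref{ax:pii} gives \([\pi_s(n) : (p\wedge q) \sas s]\). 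Since \(s \leq (p \wedge q)^\bot\), this Sasaki projection is~\(\bot\), which gives the required \([\pi_s(n) : \bot]\).
\end{enumerate}

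Combining the equivalence with Theorem~\ref{theo:kappa} and the filter argument yields \(\kappa(\pi_q^{\mc M}(m)) = \kappa(m) \sas q\). The only remaining checks are the routine orthomodular identities used above, which I would verify directly in~\(\mb H_d\).
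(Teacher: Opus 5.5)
Your proof is correct, but it is organized differently from the paper's.

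\textbf{The paper's route.} For the projection case, the paper first obtains \(\kappa\bigl(\pi_q(m)\bigr) \leq \kappa(m) \sas q\) by applying \((\pi_i)\) to \([m : \kappa(m)]\). It then excludes strict inequality by contradiction. A vector ray \(u \leq \bigl(\kappa(m) \sas q\bigr) \wedge \kappa\bigl(\pi_q(m)\bigr)^\bot\) yields \([\pi_u \circ \pi_q(m) : \bot]\). Hence \([m : u^\bot]\) follows by \((\pi_c)\) and \((\pi_\bot)\), giving the absurdity \(u \leq \kappa(m) \sas q \leq u^\bot \wedge q\).

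\textbf{Your route.} You instead prove directly \([\pi_q(m) : p] \iff [m : p \hook q]\), i.e.\ the revised axiom \((\pi_e)\) of figure~\ref{fig:rules2} together with its converse. You then read off (2) from the adjunction \(x \sas q \leq p \iff x \leq p \hook q\) and the fact that an element of \(\mb H_d\) is determined by its principal filter.

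\textbf{Comparison.} Both arguments rest on the same ingredients: \((\pi_i)\), \((\pi_c)\), \((\pi_\bot)\), and the principal-filter property of Theorem~\ref{theo:kappa} applied at \(m\) and at \(\pi_q(m)\). Yours is direct rather than by contradiction. It also shows, inside an arbitrary model, that \((\pi_e)\) follows from the original axioms, whereas in the paper \((\pi_e)\) only appears later, justified through its validity in \(\mc H_d\). Your treatment of unitaries is spelled out where the paper just says ``similarly''.

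\textbf{One point to tighten.} The language \(\mc L_d\) is only assumed to contain projectors on vector rays and on two-dimensional subspaces, besides the given \(\pi_q\). Your \(s = q \wedge (p \wedge q)^\bot\) can have any dimension up to \(\dim q\). So \(\pi_s\) need not be a symbol of the language, and \((\pi_c)\), \((\pi_\bot)\), \((\pi_i)\) then cannot be instantiated at \(s\).

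The repair is exactly the paper's device of working with a vector ray. Run your steps 1--4 with an arbitrary vector ray \(u \leq s\) in place of \(s\); one still has \(u \leq q\) and \((p \wedge q) \sas u = \bot\). This gives \([m : u^\bot]\), hence \(\kappa(m) \leq u^\bot\) by Theorem~\ref{theo:kappa}, for every such \(u\). Taking the meet over all rays \(u \leq s\) gives \(\kappa(m) \leq s^\bot = p \hook q\), and the rest of your argument goes through unchanged.
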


\begin{proof} Point~(1) has been established in theorem~\ref{theo:kappa}, as for all~\(m \in \mb M\) and~\(p \in \mb H_d\),
	\[ [m : p]^{\mc M} \iff p \in F^{\mc M}(m) \iff \kappa(m) \leq p \iff [\kappa(m) : p]^{\mc H_d}\]
	Let us show point~(2). For~\(m \in \mb M\) and~\(q \in \mb H_d\), from~\([m : \kappa(m)]^{\mc M}\), we deduce~\([\pi^{\mc M}_q(m) : \kappa(m) \sas q]^{\mc M}\) so that
	\[ \kappa \circ \pi_q^{\mc M}(m) \leq \kappa(m) \sas q\]
	Now, suppose that~\(\kappa \circ \pi_q^{\mc M}(m) < \kappa(m) \sas q\). We can then define~\(u \in \mb V_{\! d}\) such that
	\[ u \leq \bigl(\kappa(m) \sas q \bigr) \wedge \bigl(\kappa \circ \pi_q(m)\bigr)^\bot \]
	Since~\(u \leq \bigl(\kappa \circ \pi_q(m)\bigr)^\bot\) and~\([\pi_q(m) : \kappa \circ \pi_q(m)]\), it follows that
	\[ \bigl[\pi_u \circ \pi_q(m) : \bot \bigr]\]
	But as~\(u \leq \kappa(m) \sas q \leq q\), this implies~that
	\[ \bigl[\pi_u(m) : \bot \bigr]\]
	and hence~\([m : u^\bot]\). As a consequence, \(\kappa(m) \sas q \leq u^\bot \sas q\). But the compati\-bility of~\(u\) and of~\(q\) (as~\(u \leq q\)) implies that~\(u^\bot \sas q = u^\bot \wedge q\) so that we finally have
	\[ u \leq \kappa(m) \sas q \leq u^\bot \wedge q \leq u^\bot\]
	which is absurd. Point~(3) can be proven in a similar way.
\end{proof}


Since~\(\kappa\) is a strong morphism, for all~\(m \in \mb M\), we have
\[ [ \overline f^{\mc M}(m) : p]^{\mc M} \iff [ \kappa \circ \overline f^{\mc M}(m) : p]^{\mc H} \iff [\overline f^{\mc H_d} \circ \kappa(m) : p]^{\mc H_d} \]

\begin{corollary}
	\[ \mc M \models \exists \, x, \varphi(x) \ \implies \ \mc H_d \models \exists \, x, \varphi(x) \]
\end{corollary}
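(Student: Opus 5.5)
The plan is to transfer a witness from \(\mc M\) to \(\mc H_d\) through the strong morphism \(\kappa\) of the previous proposition. Assume \(\mc M \models \exists \, x, \varphi(x)\). By the semantics of first-order logic recalled earlier, there is some \(m \in \mb M\) with \(\mc M \models \varphi(m)\). The candidate witness in \(\mc H_d\) is \(\kappa(m)\). It then suffices to show \(\mc H_d \models \varphi\bigl(\kappa(m)\bigr)\), since this yields \(\mc H_d \models \exists \, x, \varphi(x)\).

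Next I would check each conjunct of \(\varphi\). Every conjunct has the form \([\overline f(x) : p]\) or \(\neg[\overline g(x) : q]\), where \(\overline f, \overline g\) are finite compositions of the unary function symbols. First I establish, by induction on the length of the composition, that \(\kappa \circ \overline f^{\mc M} = \overline f^{\mc H_d} \circ \kappa\). The base case is the identity. The inductive step uses points~(2) and~(3) of the previous proposition for a single projector \(\pi_q\) or unitary \(u\). Combining this with point~(1) gives the displayed chain
\[ [\overline f^{\mc M}(m) : p]^{\mc M} \iff [\kappa(\overline f^{\mc M}(m)) : p]^{\mc H_d} \iff [\overline f^{\mc H_d}(\kappa(m)) : p]^{\mc H_d}. \]
Since this is an equivalence, it transfers positive literals and negated literals alike. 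Conjunctions are preserved trivially.

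I do not expect a real obstacle, because all the substantive work lies in Theorem~\ref{theo:kappa} and the strong-morphism proposition. The only point needing care is that the equivalence must hold for arbitrary compositions \(\overline f\), not just single symbols; the induction above settles this. There is also a mismatch to handle: the reduction stated before the subsection mentions only unitary compositions, while the morphism proposition covers projectors too. To avoid any gap, I would phrase the induction over arbitrary compositions of both kinds. Finally, I note that \(d \geq 3\) is needed only to guarantee that \(\kappa\) exists, via Theorem~\ref{theo:kappa}.
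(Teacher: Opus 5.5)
Your proposal is correct and follows the paper's argument: take a witness \(m\), use \(\kappa(m)\) as the witness in \(\mc H_d\), and transfer every literal, positive or negated, through the strong-morphism equivalence \([\overline f^{\mc M}(m) : p]^{\mc M} \iff [\overline f^{\mc H_d}(\kappa(m)) : p]^{\mc H_d}\). Your explicit induction over compositions mixing projectors and unitaries spells out what the paper leaves implicit, and it also quietly corrects the paper's description of the \(\overline f_i\) as purely unitary compositions.
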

\begin{proof}
	Clearly, if	\( \mc M \models \varphi(m)\), then \(\mc H_d \models \varphi \bigl(\kappa(m)\bigr)\).
\end{proof}

\subsection{From Vect to any model}

The previous result cannot be used to show that for any sentence~\(\varphi\),
\[ \mc H_d \models \exists \, x, \varphi(x) \iff \mc M \models \exists \, x, \varphi(x) \]
since~\(\kappa\) is not onto in general. However, we have a weaker resultat, namely that~\(\mb V_{\!d} \subseteq \kappa(\mb M)\), as we will show next, and this will prove to be sufficient to establish the completeness of~\(\mathrm{PQM}_d\).

\begin{proposition} \label{prop:surj}
	For all~\(v \in \mb V_{\! d}\), there exists an element~\(m \in \mb M\) such that~\(\kappa(m) = v\).
\end{proposition}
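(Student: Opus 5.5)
The plan is to produce one element of \(\mb M\) whose image under \(\kappa\) is a vector ray. Then I transport it to an arbitrary ray \(v\) with a unitary, using the strong morphism property of \(\kappa\) established in the previous proposition.

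\emph{Step 1: an element with non-trivial image.} Axiom~\((\neg\bot)\) gives some \(x \in \mb M\) with \(\neg [x : \bot]^{\mc M}\). By point~(1) of the strong-morphism proposition, \([x : \bot]^{\mc M} \iff \kappa(x) \leq \bot\). Hence \(\kappa(x) \neq \bot\), i.e.\ \(\kappa(x)\) is a non-null subspace of \(\mb C^d\).

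\emph{Step 2: projecting to a ray.} Choose any vector ray \(w \leq \kappa(x)\). As remarked after the definition of \(\mc L_d\), we may assume the language contains a projector \(\pi_w\) for every ray. Point~(2) of the strong-morphism proposition then gives
\[ \kappa\bigl(\pi_w^{\mc M}(x)\bigr) = \kappa(x) \sas w = w \wedge \bigl(w^\bot \vee \kappa(x)\bigr) = w, \]
where the last equality holds because \(w \leq \kappa(x)\).

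\emph{Step 3: moving to an arbitrary ray.} Given \(v \in \mb V_{\! d}\), pick a unitary \(U \in \mb U_d\) with \(U(w) = v\); one exists since \(\mb U_d\) acts transitively on rays. Set \(m = u^{\mc M}\bigl(\pi_w^{\mc M}(x)\bigr)\). Point~(3) of the strong-morphism proposition gives \(\kappa(m) = U(w) = v\), as required.

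By the same method, the null subspace also lies in the image of \(\kappa\). Indeed,
\[ \kappa\bigl(\pi_{w^\bot}^{\mc M}(\pi_w^{\mc M}(x))\bigr) = w \sas w^\bot = \bot, \]
where the projector on \(w^\bot\) is obtained as \(\pi_w\) was: it is available directly, or through conjugation by unitaries. This gives \(\mb V_{\! d}^\star \subseteq \kappa(\mb M)\), which is what is needed to compare \(\mc M\) with \(\mc V_d\). Note that the statement, as the paper words it, concerns only rays \(v \in \mb V_{\! d}\) and does not claim that higher-dimensional subspaces are reached.

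The only delicate point is Step~1. Without axiom~\((\neg\bot)\), every element could satisfy \([m:\bot]\), and then \(\kappa\) would be constantly \(\bot\). The remaining steps are direct applications of the strong-morphism proposition.
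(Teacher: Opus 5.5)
Your proof is correct and follows the paper's argument step for step: \((\neg\bot)\) gives an element \(x\) with \(\kappa(x)\neq\bot\), and projecting onto a ray \(w\leq\kappa(x)\) gives \(\kappa(\pi_w(x))=\kappa(x)\sas w=w\); a unitary \(U\) with \(U(w)=v\) then finishes via the strong-morphism property. Your side remark that \(\bot\in\kappa(\mb M)\) is a worthwhile addition, since the paper's subsequent corollary tacitly needs it when the witness in \(\mc V_d\) is \(\bot\). However, for \(d\geq 4\) the language does not guarantee a projector on \(w^\bot\), and unitary conjugation cannot change dimension, so it is cleaner to use a ray \(w'\leq w^\bot\), for which \(w\sas w'=\bot\) as well.
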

\begin{proof} Let~\(v \in \mb V_{\! d}\). Following axiom~\ref{ax:negbot}, there exists an element~\(m \in \mb M\) such that~\(\kappa(m) \neq \bot\). Let moreover~\(w \in \mb V_{\! d}\) be such that~\(w \leq \kappa(m)\). Then
	\[ \kappa \bigl(\pi_{w}(m)\bigr) = \kappa(m) \sas w = w\]
	Now, let~\(U \in \mb U_d\) such that~\(U(w) = v\). Putting~\(m' = U \circ \pi_{w}(m)\), we~have
	\[ \kappa(m') = v \]
\end{proof}

\begin{corollary}
	\[\mc V_d \models \exists\, x, \ \Phi(x) \quad \implies \quad \mc M \models \exists \, x, \ \Phi(x) \]
\end{corollary}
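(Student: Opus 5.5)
Let \(\mc M\) be an arbitrary model of \(\mathrm{PQM}_d\) (with \(d \geq 3\), so that \(\kappa\) is available). The plan is to transport a witness from \(\mc V_d\) to \(\mc M\) by combining the strong morphism \(\kappa\) with Proposition~\ref{prop:surj}. As before, we may restrict to \(\Phi(x)\) of the form
\[ [\overline f_1(x) : p_1] \wedge \cdots \wedge [\overline f_n(x) : p_n] \wedge \neg [\overline g_1(x) : q_1] \wedge \cdots \wedge \neg [\overline g_m(x) : q_m]. \]
Here the \(\overline f_i\) and \(\overline g_j\) are compositions of unitaries and projections.

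Suppose \(\mc V_d \models \exists \, x, \Phi(x)\), and pick a witness \(r \in \mb V_{\! d}^\star\) with \(\mc V_d \models \Phi(r)\). Since \(\mc V_d\) is a substructure of \(\mc H_d\) and \(\Phi(r)\) is quantifier-free, \(\mc H_d \models \Phi(r)\).

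We first handle the case \(r \neq \bot\), so that \(r \in \mb V_{\! d}\). By Proposition~\ref{prop:surj} there is \(m \in \mb M\) with \(\kappa(m) = r\). Because \(\kappa\) is a strong morphism, for every composition \(\overline f\) and every \(p\) we have
\[ [\overline f^{\mc M}(m) : p]^{\mc M} \iff [\overline f^{\mc H_d}(\kappa(m)) : p]^{\mc H_d}. \]
Hence each literal of \(\Phi(m)\) in \(\mc M\) has the same truth value as the corresponding literal of \(\Phi(r)\) in \(\mc H_d\). This gives \(\mc M \models \Phi(m)\), and so \(\mc M \models \exists \, x, \Phi(x)\).

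The only delicate point is the witness \(r = \bot\), which Proposition~\ref{prop:surj} does not cover. I would produce some \(m \in \mb M\) with \(\kappa(m) = \bot\) directly. Start from the \(m_0\) given by Proposition~\ref{prop:surj} with \(\kappa(m_0) = v \in \mb V_{\! d}\). Choose an atom \(w \leq v^\bot\), which exists since \(d \geq 2\), and set \(m = \pi_w(m_0)\). Then
\[ \kappa(m) = \kappa(m_0) \sas w = v \sas w = \bot. \]
The same literal-by-literal transfer then yields \(\mc M \models \Phi(m)\).

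I expect no real obstacle here. The substance lies in the earlier results, namely Theorem~\ref{theo:kappa}, the morphism property and Proposition~\ref{prop:surj}. The one thing to be careful about is that the transfer uses both directions of the equivalence from the strong morphism, since \(\Phi\) contains negated verification statements. Chaining the three implications \(\mc H_d \Rightarrow \mc V_d \Rightarrow \mc M \Rightarrow \mc H_d\) then shows that all models are elementarily equivalent, which gives Theorem~\ref{theo:complete}.
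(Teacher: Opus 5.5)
Your proof is correct and follows the paper's own route. You take a witness in \(\mc V_d\), realize it as \(\kappa(m)\) via Proposition~\ref{prop:surj}, and transfer every literal (positive and negated) through the strong morphism \(\kappa\). Your separate handling of the witness \(\bot\), via \(\kappa\bigl(\pi_w(m_0)\bigr) = v \sas w = \bot\) for an atom \(w \leq v^\bot\), is a real improvement: the paper's proof takes its witness in \(\mb V_{\! d}\) and overlooks that the domain \(\mb V_{\! d}^\star\) of \(\mc V_d\) also contains \(\bot\), a case needed for instance when \(\Phi(x) = [x : \bot]\).
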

\begin{proof}
	If~\(\mc V_d \models \exists\, x, \varphi(x)\), then let~\(v \in \mb V\) be such that~\(\mc V_d \models \varphi(v)\). From proposition~\ref{prop:surj}, there exists a~\(m \in \mb M\) such that~\(\kappa(m) = v\). But then
	\[ \mc M \models \varphi(m) \iff \mc H_d \models \varphi\bigl(\kappa(m)\bigr) \iff \mc V_d \models \varphi\bigl(\kappa(m)\bigr)\]
	so that~\(\mc M \models \exists \, x, \ \varphi(x)\).
\end{proof}

\subsection{Wrapping up}

We have shown that for any~\(n \geq 3\) and for any model~\(\mc M\) of \(\mathrm{PQM}_d\),
\begin{multline*}
	\mc M \models \exists \, x, \varphi(x) \implies \mc H_d \models \exists \, x, \varphi(x) \\
	\implies \mc V_d \models \exists \, x, \varphi(x) \implies \mc M \models \exists \, x, \varphi(x)
\end{multline*}
and this chain of implications can be generalized to any sentence so that we have established:
\begin{theorem}
	For all~\(n \geq 3\), \(\mathrm{PQM}_d\) is complete.
\end{theorem}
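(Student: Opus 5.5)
The plan is to prove completeness through elementary equivalence: all models of \(\mathrm{PQM}_d\) should agree with \(\mc V_d\) on every sentence. By Gödel's completeness theorem, a consistent theory all of whose models are elementarily equivalent is complete. Consistency is already given by the corollary that \(\mc H_d\) is a model. So it suffices to fix an arbitrary model \(\mc M\) and an arbitrary sentence \(\sigma\), and to show \(\mc M \models \sigma \iff \mc V_d \models \sigma\).

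The main step is to extend the chain of implications from basic existential sentences \(\exists x, \varphi(x)\) to arbitrary sentences. I would not go through a monadic quantifier-elimination argument. Instead I would reuse the maps directly.

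\begin{enumerate}
\item Let \(\mb V^\star \subseteq \mb H_d\) be the subset of rays and \(\bot\). I would first check that \(\kappa(\mb M)\) is a substructure of \(\mc H_d\) containing \(\mb V^\star\). It is closed under the functions because \(\kappa\) is a strong morphism. It contains \(\mb V_{\!d}\) by Proposition~\ref{prop:surj}. It contains \(\bot\) as the image \(\kappa(\pi_{w^\bot}\circ\pi_w(m))\).
\item By induction on formulas, every formula \(\psi(x_1,\dots,x_k)\) satisfies \(\mc M \models \psi(\bar m) \iff \kappa(\mb M) \models \psi(\kappa(\bar m))\). Atomic formulas are preserved since \(\kappa\) is a strong morphism; the language has no equality, so there are no other atomic cases. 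Boolean connectives are immediate, and the existential step uses surjectivity of \(\kappa\) onto its image. So \(\mc M\) and the substructure \(\kappa(\mb M)\) are elementarily equivalent.
\item It then remains to show that every substructure \(\mc N\) with \(\mb V^\star \subseteq \mb N \subseteq \mb H_d\) is elementarily equivalent to \(\mc V_d\). Taking \(\mc N = \kappa(\mb M)\) finishes the proof.
\end{enumerate}

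For the third step I would use a back-and-forth argument. Since the theory has no equality and all symbols are unary, the type of a single element \(n\) is determined by the set of atomic facts \([\overline f(n):p]\). In \(\mc H_d\) these reduce to membership \(n \leq p'\) for suitable \(p'\). I would prove by induction on formulas that \(\mc N \models \psi(n_1,\dots,n_k) \iff \mc V_d \models \psi(v_1,\dots,v_k)\) whenever each pair \((n_i,v_i)\) satisfies the same atomic formulas; call this relation the matching relation. The existential case requires two extension properties.
\begin{itemize}
\item Going from \(\mc V_d\) to \(\mc N\) is trivial, since \(\mb V^\star \subseteq \mb N\) and we can match each element with itself.
\item Going from \(\mc N\) to \(\mc V_d\), given \(n \in \mb N\), I must find \(v \in \mb V^\star\) with the same atomic type. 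This is the obstacle.
\end{itemize}

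An element's type is the infinite set \(\{p : n \leq p\}\). A single ray \(v \leq n\) gives \(v \leq p\) whenever \(n \leq p\), but it may satisfy additional \(p\) that \(n\) does not. So exact type matching fails. The remedy is to weaken the invariant: for each formula \(\psi\) of quantifier depth \(r\), only the finitely many atomic formulas occurring in \(\psi\) (after unitary and projection reduction) need to be matched. The proof would be an Ehrenfeucht–Fraïssé-style induction carrying this finite set of properties along. The finite-avoidance step then reduces to what was done in the subsection ``From subspaces to vectors''. A subspace \(n \neq \bot\) that avoids finitely many \(q_j\) contains a ray avoiding all of them, because \(n\) is not a finite union of the proper subspaces \(n \wedge q_j\). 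The case \(n=\bot\) matches with \(\bot\) itself. Checking that this finite-set invariant is stable under the reductions \(p \hook q\) and \(U^{-1}(p)\) is the routine part.
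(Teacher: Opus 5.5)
Your proof is correct. It rests on the same three mathematical ingredients as the paper, organized differently:
\begin{itemize}
\item the strong morphism \(\kappa\), which is where \(d \geq 3\) enters;
\item the fact that every ray lies in \(\kappa(\mb M)\) (Proposition~\ref{prop:surj});
\item the fact that a nonzero subspace is not a finite union of proper subspaces.
\end{itemize}

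The paper first uses the equality-free, all-unary character of \(\mc L_d\) to reduce to sentences \(\exists\, x, \varphi(x)\), with \(\varphi\) a conjunction of literals. It then runs the cycle \(\mc M \Rightarrow \mc H_d \Rightarrow \mc V_d \Rightarrow \mc M\). That cycle goes through \(\mc H_d\) itself and needs only that \(\kappa\) be a strong morphism for the first arrow.

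You avoid the normal form. You get \(\mc M \equiv \kappa(\mb M)\) from the general fact that a surjective strong homomorphism, in a language without equality, preserves all formulas. You then get \(\kappa(\mb M) \equiv \mc V_d\) by an Ehrenfeucht--Fra\"{\i}ss\'e style induction with finite atomic invariants. Its ``back'' step is exactly the paper's ray-avoidance argument, and its ``forth'' step is \(\mb V^\star \subseteq \kappa(\mb M)\).

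What each route buys: yours handles arbitrary quantifier alternation directly, whereas the paper's passage from basic existential sentences to all sentences relies on the normal-form reduction, which it asserts without proof. The paper's route is shorter once that reduction is granted.

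Your version also makes the \(\bot\) case explicit, which the paper glosses over. When \(\bigwedge_i p_i = \bot\), the only witness in \(\mc V_d\) is \(\bot\), yet Proposition~\ref{prop:surj} only places rays in \(\kappa(\mb M)\).

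One small repair is needed. The symbol \(\pi_{w^\bot}\) need not belong to \(\mc L_d\) when \(d \geq 4\), since only projectors on rays and two-dimensional subspaces are guaranteed. Take instead a ray \(w' \leq w^\bot\), and use \(\kappa\bigl(\pi_{w'} \circ \pi_w(m)\bigr) = w \sas w' = \bot\).
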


Finally, let us explore two consequences of this theorem. First, we can modify the axioms of the theory. For instance, in axiom~\ref{ax:wedge}, we can drop the requirement that~\(p\) and~\(q\) are compatible (because of completeness, it is sufficient to show that this holds in~\(\mc H_d\)):
\begin{multline*}
	\forall \, x, p, q \in \mb H_d, \quad [x : p]^{\mc H} \wedge [x : q]^{\mc H} \implies \\ x \leq p \mathrel{\mathrm{and}} x \leq q
	\implies x \leq p \wedge q \implies [x : p \wedge q]^{\mc H}
\end{multline*}
This highly non-trivial simplification is a consequence of the topology of~\(\mb H_d\) for~\(d \geq 3\).

\medskip

Another simplification applies to rules regarding projection operators. We have~\(\pi_q^{\mc H}(x) = x \sas q\) and we have seen earlier that
\[ x \sas q \leq p \implies x \leq p \hook q \]
which translates~as
\[ [\pi_q(x) : p] \rightarrow [x : p \hook q]\]
This axiom implies, and thus can replace, both~\ref{ax:pic} and~\ref{ax:pibot}, as for the latter,
\[ \bot \hook q = q^\bot \vee (q \wedge \bot ) = q^\bot \]
and for the former, if~\(p \leq q\), then~\(p^\bot \hook q = p^\bot\) and~\(p^\bot \sas p = \bot\) so~that
\[ [\pi_p \circ \pi_q(x) : \bot]^{\mc H} \implies x \leq p^\bot \implies [\pi_p(x) : \bot]^{\mc H}
\]
We present in figure~\ref{fig:rules2} a revised axiomatization of~\(\mathrm{PQM}_d\) for~\(d \geq 3\).

\begin{figure}[ht]
	\begin{align*}
		         &  & \fall x   & \qquad  [x:\top] \tag{\(\top\)}                                          \\
		         &  & \fexist x & \qquad  \neg [x : \bot] \tag{\(\neg \bot\)}                              \\
		p \leq q &  & \fall x   & \qquad [x:p] \rightarrow [x:q] \tag{\(\leq\)}                            \\
		         &  & \fall x   & \qquad  [x:p] \wedge [x:q] \rightarrow [x : p \wedge q] \tag{\(\wedge\)} \\[6pt]
		         &  & \fall x   & \qquad  [x : p] \rightarrow [\pi_q(x) : p \sas q]       \tag{\(\pi_i\)}  \\
		         &  & \fall x   & \qquad  [\pi_q(x) : p] \rightarrow [x : p \hook q] \tag{\(\pi_e\)}       \\[3pt]
		         &  & \fall x   & \qquad  [x : p] \rightarrow [U(x) : U(p)] \tag{\(u_i\)}                  \\
		         &  & \fall x   & \qquad  [U(x) : p] \rightarrow [x : U^{-1}(p)] \tag{\(u_e\)}
	\end{align*}
	\caption{The axioms of \(\mathrm{PQM}_d\), revised for~\(d \geq 3\)}
	\label{fig:rules2}
\end{figure}

\medskip

A second important consequence of our study is that the existence of the~\(\kappa\)-function actually constitutes a criterion for characterizing models of~\(\mathrm{PQM}_d\).
\begin{theorem} \label{theo:charac} 
	If~\(d \geq 3\), an~\(\mc L_d\)-structure~\(\mc M\) with domain~\(\mc M\) is a model of \(\mathrm{PQM}_d\) if and only if there exists a strong~\(\mc L_d\)-morphism~\(\kappa : \mb M \rightarrow \mb H_d\) such that~\(\kappa(\mb M) \neq \{\bot\}\).
\end{theorem}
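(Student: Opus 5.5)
The proof splits into the two directions of the equivalence.

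\emph{Only if.} Let \(\mc M\) be a model of \(\mathrm{PQM}_d\) with \(d \geq 3\). Theorem~\ref{theo:kappa} provides a function \(\kappa : \mb M \rightarrow \mb H_d\), and the subsequent proposition shows that it is a strong \(\mc L_d\)-morphism. Axiom~\ref{ax:negbot} gives some \(m \in \mb M\) with \(\neg[m:\bot]^{\mc M}\). By point~(1) of that proposition, \(\kappa(m) \not\leq \bot\), so \(\kappa(m) \neq \bot\) and hence \(\kappa(\mb M) \neq \{\bot\}\). This direction uses only results already established.

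\emph{If.} Suppose \(\kappa : \mb M \rightarrow \mb H_d\) is a strong morphism with \(\kappa(m_0) \neq \bot\) for some \(m_0\). The plan is to transfer each axiom of figure~\ref{fig:rules} from \(\mc H_d\) to \(\mc M\), using the fact that \(\mc H_d\) is a model.

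\begin{itemize}
	\item \emph{Axiom \ref{ax:negbot}.} It holds at \(m_0\), since \([m_0:\bot]^{\mc M} \iff \kappa(m_0) \leq \bot\), which is false.
	\item \emph{Universal axioms.} Every other axiom has the form \(\forall x\, \psi(x)\), where \(\psi\) is a Boolean combination of atoms \([\overline f(x):p]\) and \(\overline f\) is a composition of projections and unitaries. Strong morphism conditions~(2) and~(3), applied by induction on the length of \(\overline f\), give \(\kappa(\overline f^{\mc M}(m)) = \overline f^{\mc H_d}(\kappa(m))\). Combined with condition~(1), this yields
	\[ [\overline f^{\mc M}(m):p]^{\mc M} \iff [\overline f^{\mc H_d}(\kappa(m)):p]^{\mc H_d}. \]
	Hence \(\mc M \models \psi(m) \iff \mc H_d \models \psi(\kappa(m))\). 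The right-hand side holds because \(\mc H_d\) is a model, so \(\mc M \models \forall x\, \psi(x)\).
	\item \emph{Side conditions.} Conditions such as \(p \leq q\) or \(p \comp q\) are metalevel constraints on which instances belong to the schema. They are the same in both structures, so they need no separate treatment.
\end{itemize}

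The main point to check is that the notion of \emph{strong morphism} really carries the full biconditional for atoms, rather than just the forward implication. Negated atoms appear in \(\psi\): the implications in the axioms become disjunctions containing negations, so both directions of condition~(1) are needed. Since the definition in the preceding proposition states condition~(1) as an equivalence, this is available.

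No equality is involved, and all symbols are unary, so there are no further symbols or cases to check. The argument is therefore just a preservation lemma for quantifier-free formulas under strong morphisms. Universal sentences go downward because they hold at every point of \(\mc H_d\), and in particular on the image \(\kappa(\mb M)\).
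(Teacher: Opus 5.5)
Your proof is correct and follows the same route as the paper. The forward direction comes from Theorem~\ref{theo:kappa}, the strong-morphism proposition and axiom~\ref{ax:negbot}. The converse transfers each axiom through \(\kappa\) to a statement about \(\kappa(m)\) in \(\mc H_d\), and uses \(\kappa(\mb M) \neq \{\bot\}\) to get~\ref{ax:negbot}. You are slightly more careful than the paper: you spell out the induction using conditions~(2) and~(3), which the axioms involving projections and unitaries require and which the paper's one-line reduction leaves implicit.
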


\begin{proof}
	We have already shown that for every model~\(\mc M\) of~\(\mathrm{PQM}_d\), there exists a strong morphism~\(\kappa : \mb M \rightarrow \mb H_d\) verifying~\(\kappa(\mb M) \neq \{\bot\}\).

	Conversely, suppose that~\(\mc M\) is a~\(\mc L_d\)-structure and that there exists a strong morphism~\(\kappa : \mb M \rightarrow \mb H_d\) such that~\(\kappa(\mb M) \neq \{\bot\}\). Let us show that~\(\mc M\) is indeed a model of~\(\mathrm{PQM}_d\). Since~\(\kappa\) is a strong morphism, one has:
	\[ \forall \, m \in \mb M, \ \forall \, p \in \mb H_d, \quad [m:p]^{\mc M} \iff [\kappa(m) : p]^{\mc H} \iff \kappa(m) \leq p \]
	As a consequence, all the axioms of~\(\mathrm{PQM}_d\) are trivially verified. For instance, axiom~\ref{ax:wedge} reduces~to
	\[ \forall \, m \in \mb M, \ \forall \, p, q \in \mb H_d, \quad \kappa(m) \leq p \mathbin{\ \mathrm{and} \ } \kappa(m) \leq q \implies \kappa(m) \leq p \wedge q\]
	which proof is direct.
\end{proof}

\section{Conclusion and perspectives}

This article shows that the study of quantum mechanics through the use of a classical logic is relevant, and complements the usual approach of quantum logic based on the study of structures other than boolean algebras. Moreover, through the results of model theory, it offers new insights into the foundations of quantum mechanics by means of the models of the theory, which are characterized by theorem~\ref{theo:charac}. However, we will postpone this discussion, and more generally the consequence of this formal theory on the foundations of quantum mechanics, in a following companion article.

One the logical side, this formalism can be extended on several directions. First, we have only dealt with finite dimension, which played a crucial role in theorem~\ref{theo:kappa}. It is crucial to study how to modify the theory in order to also take into account infinite-dimensional separable Hilbert spaces. Another important direction of development is to integrate inside the formalism the possibility for building composite systems, through some sort of tensor product, thus removing the constraint of having a linear circuit. Similarly, the formalism could be modify in order to take into account the different possible paths a particle can take, while in its current version, only one path can be followed. The multiplicative and additive connectives of linear logic \cite{Girard87tcs} seem to constitute a natural inspiration for formalizing this.


\appendix

\section{Proofs}

\subsection{Proof of proposition~\ref{prop:valid_axioms}} \label{proof:valid_axioms}
\begin{description}
	\item [\eqref{ax:top}] This follows from the fact that~\(\bigl\{v \in \mb V_{\! d} \bigm| v \leq \top^\bot\} = \emptyset\).

	\item [\eqref{ax:negbot}] This is a reformulation of the fact that there exists at least one~\(v \in \mb V_{\! d}\) such that applying~\(\pi_v\) does not always lead to an \emph{impossible} circuit.

	\item [\eqref{ax:leq}] Let~\(v \in \mb V_{\! d}\) such that~\(v \leq q^\bot\). Since~\(p \leq q\), we have~\(q^\bot \leq p^\bot\) and, by transitivity,~\(v \leq p^\bot\).

	      As a consequence,~\([s : p]\) implies that~\(\pi_v(s) : \Imp\).

	\item[\eqref{ax:wedge}] Suppose that~\(p, q \in \mb H_d\) are compatible, and let~\(v \in \mb V_{\! d}\) be such that~\(v \leq (p \wedge q)^\bot\). The compatibility of~\(p\) and~\(q\) implies that
	      \[ (p \wedge q)^\bot = p^\bot \vee \bigl(q^\bot \wedge p\bigr)\]
	      Moreover, we have
	      \[ v \leq \bigl(v \sas p^\bot\bigr) \vee \bigl(v \sas (q^\bot \wedge p)\bigr)\]
	      and
	      \[ v \sas (q^\bot \wedge p) \leq \bigl(v \sas p^\bot\bigr)^\bot \]
	      Now, \(v \sas p^\bot \leq p^\bot\) and~\([s : p]\) imply~\(\pi_{v \sas p^\bot}(s) : \Imp\). Similarly, we have \(\pi_{v \sas (q^\bot \wedge p)}(s): \Imp\). As a result, using~\eqref{c6}, we have
	      \[ \pi_{(v \sas p^\bot) \vee (v \sas (q^\bot \wedge p))}(s) : \Imp\]
	      and finally, using~\eqref{c3}, we conclude that \(\pi_v(s) : \Imp\).

	\item[\eqref{ax:pii}]
	      Let~\(v \in \mb V_{\! d}\) be such that~\(v \leq (p \sas q)^\bot\). Equivalently, we have~\(v \sas q \leq p^\bot\). In particular, since~\([s : p]\), this implies~\(\pi_{v \sas q}(s) : \Imp\) and, using~\eqref{c2},
	      \[ \pi_{v \sas q} \circ \pi_q(s): \Imp\]
	      Now, since~\(v \sas q^\bot \leq q^\bot\), using~\eqref{c1}, we also have
	      \[ \pi_{v \sas q^\bot} \circ \pi_q(s): \Imp\]
	      Finally, since~\(v \leq (v \sas q) \vee \bigl(v \sas q^\bot\bigr)\), using~\eqref{c6}, we deduce that
	      \[ \pi_v \circ \pi_q(s) : \Imp \]

	\item[\eqref{ax:pibot}]
	      Let~\(v \leq (q^\bot)^\bot\). Using~\eqref{c2}, we have
	      \[ \pi_v \circ \pi_q(s) : \Imp \iff \pi_v(s) : \Imp \]
	      But as~\([\pi_q(s) : \bot]\), obviously,~\(\pi_v \circ \pi_q(s) : \Imp\), so that~\(\pi_v(s) : \Imp\).

	\item[\eqref{ax:ui}] Let~\(v \leq U(p)^\bot\). We have, using~\eqref{c4}:
	      \[ \pi_v \circ U(s) : \Imp \iff \pi_{U^{-1}(v)}(s) : \Imp\]
	      But~\(U^{-1}(v) \leq p^\bot \iff v \leq U(p)^\bot\), so that~\([s : p]\) implies~\(\pi_{U^{-1}(u)}(s) : \Imp\) and, finally, \(\pi_v \circ U(s) : \Imp\).

	\item[\eqref{ax:ubot}]
	      We want to prove that~\([U(x):p]\) implies~\([x:U^{-1}(p)]\).
	      Let~\(v \in \mb V_{\! d}\) such that~\(v \leq \bigl(U^{-1}(p)\bigr)^\bot\).
	      We want to prove that~\(\pi_v(s) : \Imp\). But thanks to rule~\eqref{c4},
	      \[ \pi_v(s) : \Imp \iff \pi_{U(v)} \circ U(s) : \Imp\]
	      Now, we have \( v \leq \bigl(U^{-1}(p)\bigr)^\bot \iff U(v) \leq p^\bot\)
	      and as~\([U(x) : p]\), it follows that~\(\pi_{U(v)} \circ U(s) : \Imp\) and, equivalently, that~\(\pi_v(s) : \Imp\)
\end{description}

\subsection{Proof of proposition~\ref{prop:no_two_atoms}}
\label{proof:no_two_atoms}
Let~\(P(a)\) denote the fact that~\(F\) contains two vectors~\(u_+\) and~\(u_-\) which coordinates are
\[ u_+ \begin{pmatrix} a \\ 0 \\ 1 \\ 0 \\ \vdots \\ 0\end{pmatrix} \qquad u_- \begin{pmatrix} -a \\ 0 \\ 1 \\ 0 \\ \vdots \\ 0\end{pmatrix} \]
in a suitably chosen orthonormal basis.

\begin{proposition}
	For all~\(a \in \mathopen ]0, 1\mathclose[\), if \(P(a)\), then~\(P(b)\) for all~\(b \in [a, f(a)]\) where
	\[ f : x \mapsto \frac x {\sqrt{1 - x^2}}\]
\end{proposition}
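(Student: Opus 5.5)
The plan is to stay inside the real plane \(P = \vect(e_1, e_3) \subseteq \mb C^d\) and a third direction \(e_2\). I will only use the closure properties of \(F = F^{\mc M}(m)\) established earlier: upward closure, and the rule \(p, q \in F \Rightarrow p \sas q \in F\). The starting remark is that \(u_+, u_- \in F\) implies \(P = u_+ \vee u_- \in F\), and that every subspace containing a ray of \(F\) is itself in \(F\). Since \(P \in F\), a ray \(r \in F\) gives \(r \sas P \in F\), and this is a ray of \(P\) whenever \(r \not\leq P^\bot\). So every construction below ends by projecting back onto \(P\). The symmetry \(e_1 \mapsto -e_1\) exchanges \(u_+\) and \(u_-\) and preserves \(P\). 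Hence it suffices to produce all rays \(\vect(b, 0, 1)\) with \(b \in [a, f(a)]\); the rays \(\vect(-b,0,1)\) follow by the mirror argument.

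The basic computation is the following. Take a ray \(v \in F\) and a plane \(q \in F\) meeting \(P\) in a single ray \(r\), and let \(r'\) be the ray of \(P\) orthogonal to \(r\). The compression \(\Pi_P \Pi_q \Pi_P\) restricted to \(P\) is \(|r\ket\bra r| + t\,|r'\ket\bra r'|\), where \(t = \cos^2\) of the angle between \(q\) and \(P\) along \(r'\). So \((v \sas q) \sas P\) is obtained from \(v\) by shrinking its \(r'\)-component by the factor \(t \in [0,1]\). If \(q\) is required to contain \(u_+\), then \(r = u_+\), and we only obtain rays angularly between \(u_-\) and \(u_+\), i.e.\ \(b \in [-a, a]\). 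To go outward we must first leave \(P\).

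Step one: choose the plane \(q_\theta = \vect(u_+, w_\theta) \in F\) with \(w_\theta = \cos\theta\, e_2 + \sin\theta\, x\), where \(x \in P\) is orthogonal to \(u_+\). Set \(v_\theta = u_- \sas q_\theta\), a ray of \(F\) outside \(P\), depending continuously on \(\theta\).

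Step two: every plane \(q'\) containing \(v_\theta\) lies in \(F\). Such a plane meets \(P\) in a ray \(r\) that can now be chosen fairly freely. Then \((u_+ \sas q') \sas P\) (or \((u_- \sas q') \sas P\)) shrinks the component orthogonal to \(r\). Choosing \(r\) beyond \(u_+\), on the far side from \(u_-\), yields rays \(\vect(b,0,1)\) with \(b > a\).

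Step three: parametrize \((\theta, q')\) and compute the resulting slope \(b(\theta, q')\) explicitly, as rational expressions in \(a\), \(\cos\theta\), \(\sin\theta\). Then show that the supremum of \(b\) over admissible parameters equals \(f(a) = a/\sqrt{1-a^2}\) and is attained. Since the parameter space is connected and \(b\) is continuous, with \(b = a\) at a degenerate choice (\(q' \ni u_+\)), the intermediate value theorem gives every \(b \in [a, f(a)]\).

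The main obstacle is step three: the explicit two-parameter optimization showing the attainable range is exactly \([a, f(a)]\). In particular it must be checked that the extreme value is \(a/\sqrt{1-a^2}\) rather than some other expression. I would first try the symmetric choice where \(q'\) is spanned by \(v_\theta\) and the ray of \(P\) orthogonal to \(u_+\), reducing to one parameter \(\theta\), and maximize in \(\theta\) by calculus. If that one-parameter family falls short of \(f(a)\), I would instead alternate projections, using the newly obtained rays as new pivots. Only the attainment of \(f(a)\) as a single step is needed; iteration of the proposition later will push \(a\) toward \(1\).
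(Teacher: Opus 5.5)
\paragraph{There is a genuine gap in Step three.} Step three is the whole quantitative content of the proposition, not a computation to fill in later. As written, you establish rays with \(b \le a\) (Step one projected back onto \(P\)). You also show qualitatively that some \(b > a\) occur, but not how far they go.

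Your stated expectations for that step are also off:
\begin{itemize}
\item Your first candidate, \(q' = \vect(v_\theta, r)\) with \(r \perp u_+\) in \(P\), is degenerate for \(u_+\). There \(u_+\) lies along \(r'\), so \((u_+ \sas q')\sas P = u_+\).
\item The supremum is not \(f(a)\). One of your rays \(v_\theta\) is \(\vect(0, f(a), 1)\). For \(q' = \vect\bigl((0, f(a), 1), e_1\bigr)\) one gets \(\Pi_{q'}u_+ = \bigl(a, (1-a^2)f(a), 1-a^2\bigr)\). Hence \((u_+\sas q')\sas P = \vect(a, 0, 1-a^2)\), i.e.\ \(b = a/(1-a^2) > f(a)\).
\end{itemize}
This example shows how to finish your own route. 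Fix \(v = \vect(0,f(a),1)\) and rotate \(r\) in \(P\) from \(u_+\) to \(e_1\). The resulting slope is continuous, since the third coordinate stays positive, and runs from \(a\) to \(a/(1-a^2) \geq f(a)\). The intermediate value theorem then gives every \(b \in [a, f(a)]\).

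\paragraph{The detour through \(P\) is unnecessary.} \(P(b)\) only asks for the vectors \((\pm b, 0, 1, 0, \ldots)\) ``in a suitably chosen orthonormal basis''. The paper's proof stops where your Step one ends.
\begin{itemize}
\item The paper takes \(w = (x, y, 1, 0, \ldots)\). The planes \(\vect(u_\pm, w)\) are in \(F\) by \((\leq)\). When they are compatible, their meet \(\vect(w)\) is in \(F\) by \((\wedge)\).
\item Compatibility is exactly \(x^2 + (1-a^2)y^2 = a^2\), an ellipse with semi-axes \(a\) and \(f(a)\).
\item Your \(v_\theta = u_-\sas\vect(u_+, w_\theta)\) is precisely such a meet. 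Since \(u_- - \Pi_{q_\theta}u_- \perp q_\theta\), the plane \(\vect(u_-, v_\theta)\) is compatible with \(q_\theta = \vect(u_+, v_\theta)\).
\item As \(\theta\) varies, the \(v_\theta\) sweep this whole ellipse.
\end{itemize}
Now pair the antipodal points \((x, y, 1)\) and \((-x, -y, 1)\), and rotate the basis in the \((e_1, e_2)\)-plane. This puts them in the form \((\pm b, 0, 1)\) with \(b = \sqrt{x^2+y^2}\), which runs over \([a, f(a)]\) along the ellipse. No optimization is needed.
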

Note that property~\(P(1)\) means that~\(F\) contains two orthogonal non-null vectors so that by compatible intersection,~\(F\) contains~\(\bot\).

\begin{proof}
	Let~\(x, y \in \mb R\) and define~\(w \begin{pmatrix} x \\ y \\ 1 \\ 0 \\ \vdots \\ 0\end{pmatrix}\) (the coordinates being expressed in the orthonormal basis corresponding to property~\(P(a)\)).

	Because of axiom~\eqref{ax:leq}, as~\(\vect(u_+)\) (resp.~\(\vect(u_-)\)) is in~\(F\), then so is~\(\vect(u_+, w)\) (resp.~\(\vect(u_-, w)\)), and if they are compatible, then~\(\vect(w) \in F\), following axiom~\eqref{ax:wedge}.

	But it is clear that we have~\(\vect(u_\pm, w) = \vect(v_\pm, w)\) with
	\[
		v_\pm = u_\pm - \frac {\bra w | u_\pm\ket}{\bra w | w\ket} w
	\]
	Since~\(\bra w|v_+\ket = \bra w|v_-\ket = 0\), then \(\vect(v_+, w)\) and~\(\vect(v_-, w)\) are compatible if and only if~\(\bra v_+ | v_- \ket = 0\), that~is if and only~if
	\[ x^2 + (1 - a^2) y^2 = a^2 \]
	In other words, we have shown that if~\(x^2 + (1 - a^2) y^2 = a^2\), then~\(w \in F\). One can recognize the equation of an ellipse with semi-minor axis~\(a\) and semi-major axis~\(f(a)\),
	with
	\[ f : x \in [0, 1\mathclose[ \mapsto \frac x {\sqrt{1 - x^2}}\]

	As a consequence, \( \forall \,  b \in [a, f(a)], \ P(b) \).
\end{proof}


\begin{center}
	\begin{tikzpicture}
		\draw[black!40] (0, 0) ellipse (1.2cm and 1.2cm);
		\draw[black!40] (0, 0) ellipse (1.8cm and 1.8cm);
		\draw[black!40] (0, 0) ellipse (1.4cm and 1.4cm);
		\draw[thin, black!40, densely dashed] (0, 1.2) -- (-1.7, 1.2) node [left, black] {\(\scriptscriptstyle \vphantom{f(a)} a\)};
		\draw[thin, black!40, densely dashed] (0, 1.4) -- (-1.7, 1.4) node [left, black] {\(\scriptscriptstyle \vphantom{f(a)} b\)};
		\draw[thin, black!40, densely dashed] (0, 1.8) -- (-1.7, 1.8) node [left, black] {\(\scriptscriptstyle \vphantom{f(a)} f(a)\)};
		\draw [-] (-2, 0) -- (2, 0);
		\draw [-] (0, -2) -- (0, 2);
		\draw[thick] (0, 0) ellipse (1.2 and 1.8);
		\draw (1.44561264, 1.38210133) -- (-1.44561264, -1.38210133);
		\draw[fill=white, thick] (1.01192885, 0.96747093) circle (1.2pt);
		\draw[fill=white, thick] (-1.01192885, -0.96747093) circle (1.2pt);
		\draw[fill=white, thick] (1.2, 0) circle (1.2pt) node [below left=-2pt and -3pt] {\(\scriptstyle u_+\)};
		\draw[fill=white, thick] (-1.2, 0) circle (1.2pt) node [below right=-2pt and -2pt] {\(\scriptstyle u_-\)};
	\end{tikzpicture}
\end{center}

\begin{proof}[Proof of proposition~\ref{prop:no_two_atoms}]

	The proposition can be restated as the fact that for all~\(a \in \mathopen ]0, 1]\), \(P(a)\) implies~\(P(1)\). But the study of~\(f\) shows that for all~\(a \in \mathopen]0, 1]\), there exists an integer~\(n\) such that~\(f^{(n)}(a) \geq 1\), so that~\(F\) contains two orthogonal vectors, and hence contains~\(\bot\).
\end{proof}

\subsection{Proof of proposition~\ref{prop:incompatible}}
\label{proof:incompatible}
Let~\(P, Q\) be two incompatible elements of~\(F^{\mc M}(m)\), seen as subspaces, and denote~\(\Pi_P\) (resp.~\(\Pi_Q\)) the orthogonal projection~\(P\) (resp.~on~\(Q\)).

First, one can remark that the restriction of~\(\Pi_P \circ \Pi_Q\) to~\(P\) is self-adjoint with its eigenvalues in~\([0, 1]\). Moreover, since~\(P\) and~\(Q\) are not compatible, it possesses an eigenvalue~\(\lambda \in \mathopen] 0, 1\mathclose[\). Let~\(u\) be an associated eigenvector, and let~\(v = \Pi_Q(u)\). Obviously, \(\Pi_P(v) = \lambda u\), and also~\(u \not \in Q\) and~\(v \not \in P\).

Let's define~\(C = \vect(u, v)\). Having, equivalently,~\(C = \vect(u, v - \lambda u)\), since~\(u \in P\) and~\(v - \lambda u \in P^\bot\), it follows that~\(C\) and~\(P\) are compatible. Similarly,~\(C\) and~\(Q\) are compatible.

As a consequence, \(\bigl\{P \sas C, Q \sas C\bigr\} \in F^{\mc M}\bigl(\pi_C^{\mc M}(m)\bigr)\). But~\(P \sas C = P \wedge C = \vect(u)\) and~\(Q \sas C = Q \wedge C = \vect(v)\), so that~\(F^{\mc M}\bigl(\pi_C^{\mc M}(m)\bigr)\) contains two distinct elements of~\(V\). As a consequence, from proposition~\ref{prop:no_two_atoms}, it follows that~\(\bot \in F^{\mc M}\bigl(\pi_C^{\mc M}(m)\bigr)\). As a consequence from~\eqref{ax:pibot}, \([m : C^\bot]^{\mc M}\). Finally, using~\eqref{ax:wedge}, we have~\([m : P \wedge C^\bot]^{\mc M}\) with~\(P \wedge C^\bot < P\). This shows that~\(P\) is not minimal in~\(F^{\mc M}(m)\) and neither is~\(Q\).

\bibliographystyle{plain}


\end{document}